\renewcommand{\eqref}[1]{(\ref{#1})}
\newcommand{\secref}[1]{\mbox{Section~\ref{#1}}}
\newcommand{\figref}[1]{\mbox{Figure~\ref{#1}}}
\newcommand{\tblref}[1]{\mbox{Table~\ref{#1}}}
\newcommand{\algref}[1]{\mbox{Algorithm~\ref{#1}}}
\renewcommand{\Pr}[1]{\ensuremath{\mathrm{P}\!\left[#1\right]}}
\DeclareMathOperator*{\argmin}{arg\;min}
\DeclareMathOperator*{\argmax}{arg\;max}
\newcommand{\setO}{\ensuremath{\mathcal{O}}}
\newcommand{\setX}{\ensuremath{\mathcal{X}}}
\newcommand{\bma}{\ensuremath{\mathbf{a}}}
\newcommand{\bmh}{\ensuremath{\mathbf{h}}}
\newcommand{\bml}{\ensuremath{\mathbf{l}}}
\newcommand{\bmn}{\ensuremath{\mathbf{n}}}
\newcommand{\bms}{\ensuremath{\mathbf{s}}}
\newcommand{\bmv}{\ensuremath{\mathbf{v}}}
\newcommand{\bmx}{\ensuremath{\mathbf{x}}}
\newcommand{\bmy}{\ensuremath{\mathbf{y}}}
\newcommand{\bmz}{\ensuremath{\mathbf{z}}}
\newcommand{\bA}{\ensuremath{\mathbf{A}}}
\newcommand{\bD}{\ensuremath{\mathbf{D}}}
\newcommand{\bH}{\ensuremath{\mathbf{H}}}
\newcommand{\bI}{\ensuremath{\mathbf{I}}}
\newcommand{\bL}{\ensuremath{\mathbf{L}}}
\newcommand{\bN}{\ensuremath{\mathbf{N}}}
\newcommand{\bS}{\ensuremath{\mathbf{S}}}
\newcommand{\bT}{\ensuremath{\mathbf{T}}}
\newcommand{\bV}{\ensuremath{\mathbf{V}}}
\newcommand{\bY}{\ensuremath{\mathbf{Y}}}
\newcommand{\bZero}{\ensuremath{\mathbf{0}}}
\newcommand{\bOne}{\ensuremath{\mathbf{1}}}
\newcommand{\MT}{{\ensuremath{U}}}
\newcommand{\MR}{{\ensuremath{B}}}
\newcommand{\No}{\ensuremath{N_{\mathrm{0}}}}
\newtheorem{thm}{Theorem}
\newtheorem{prop}[thm]{Proposition}
\renewenvironment{proof}[1][\proofname]{\par
  \pushQED{\qed}%
  \normalfont \topsep6\p@\@plus6\p@\relax
  \trivlist
  \item[\hskip\labelsep
        \scshape
    #1\@addpunct{.}]\ignorespaces
}{%
  \popQED\endtrivlist\@endpefalse
}
\renewcommand{\bml}{\ensuremath{\boldsymbol \ell}}
\newcommand{\rev}[1]{\textcolor{black}{#1}}
\mathchardef\mhyphen="2D
\begin{document}
\title{Data Detection in Large Multi-Antenna Wireless \\ Systems via Approximate Semidefinite Relaxation}
\author{
\IEEEauthorblockN{Oscar Casta\~neda, Tom Goldstein, and Christoph Studer} 
\thanks{O. Casta\~neda and C. Studer are with the School~of ECE, Cornell University, Ithaca, NY; e-mail: oc66@cornell.edu, studer@cornell.edu} 
\thanks{T. Goldstein is with the Department of CS, University of Maryland, College Park, MD; e-mail: tomg@cs.umd.edu}
\thanks{A short version of this paper summarizing the TASER FPGA design for large MU-MIMO data detection has been presented at the  IEEE International Symposium on Circuits and Systems (ISCAS) 2016 \cite{CGS2016a}.}
\thanks{The system simulator for TASER used in this paper will be available on GitHub: \url{https://github.com/VIP-Group/TASER}}
}

\maketitle

\begin{abstract}
Practical data detectors for future wireless systems with hundreds of antennas at the base station must achieve high throughput and low error rate at low complexity. Since the complexity of maximum-likelihood (ML) data detection is prohibitive for such large wireless systems, approximate methods are necessary. 
\rev{In this paper, we propose a novel data detection algorithm referred to as Triangular Approximate SEmidefinite Relaxation (TASER), which is suitable for two application scenarios: (i) coherent data detection in large multi-user multiple-input multiple-output (MU-MIMO) wireless systems and (ii) joint channel estimation and data detection in large single-input multiple-output (SIMO) wireless systems.}
For both scenarios, we show that TASER achieves near-ML error-rate performance at low complexity by relaxing the associated ML-detection \rev{problems} into a semidefinite program, which we solve approximately using a preconditioned forward-backward splitting procedure.
Since the resulting problem is non-convex, we provide convergence guarantees for our algorithm. 
To demonstrate the efficacy of TASER in practice, we design a systolic architecture that enables our algorithm to achieve high throughput at low hardware complexity, \rev{and we develop reference field-programmable gate array (FPGA) and application-specific integrated circuit (ASIC) designs for various antenna configurations.}
\end{abstract}

%

%
\begin{IEEEkeywords}
\rev{FPGA and ASIC design, data detection, joint channel estimation and data detection, large single-input and multiple-input multiple-output (SIMO and MIMO) wireless systems,  semidefinite relaxation.}
\end{IEEEkeywords}


\section{Introduction}
\label{sec:intro}
\IEEEPARstart{L}{arge} \rev{multiple-input multiple-output (MIMO) and  single-input multiple-output (SIMO) wireless technology, where the base station (BS) is equipped with hundreds or thousands  of antennas,} are widely believed to play a major role in fifth-generation~(5G) cellular communication systems~\cite{Marzetta2010,Rusek2012,hoydis2013massive,larsson2014massive,andrews2014will,LUetal2014}.
Such large wireless systems promise improved spectral efficiency, coverage, and range compared to traditional small-scale systems.  However, the extremely large number of BS antennas requires the design of high-performance data-detection algorithms that can be implemented efficiently in very-large scale integration (VLSI) circuits~\cite{Wu2014}.
\rev{In fact, data detection is among the most critical baseband-processing tasks in terms of implementation complexity, power consumption, throughput, and error-rate performance for such systems~\cite{wong2002vlsi,burg2005vlsi}.}

\rev{To enable high-throughput uplink communication for massive multi-user (MU) MIMO wireless systems (where tens of user terminals transmit data to a BS with hundreds of antennas),} a variety of  low-complexity data-detection algorithms~\cite{vardhan2008,Prabhu2013,shengwu2014,svac2014,Hu2014,Yincg,LiuYGLS15,jeon2015optimality,li2015accelerating}, as well as a few field-programmable gate array (FPGA) implementations~\cite{Wu2014,Yin2015,Wu2016,WZXXY2016} and application-specific integrated circuit (ASIC) designs~\cite{YWWDCS14b} have been proposed recently.
To date, all data detectors that have been implemented in VLSI for such high-dimensional problems rely on (approximate) linear data detection~\cite{Wu2014,Yin2015,Wu2016,YWWDCS14b,WZXXY2016}.
Such linear methods are known to suffer from a significant error-rate performance loss for more realistic systems with a not-so-large number of antennas at the BS or where the number of user terminals is comparable to that of the number of BS antennas~\cite{Wu2014}.
Furthermore, the literature on large MU-MIMO data detection almost exclusively relies on the assumption of perfect channel state information (CSI) at the BS---an assumption that cannot be satisfied in practice.

\subsection{Contributions}
In this paper, we propose a novel data detection algorithm and corresponding VLSI designs for large wireless systems. 
\rev{Our algorithm, referred to as 
Triangular Approximate SEmidefinite Relaxation (TASER), 
can be deployed in two different application scenarios: (i) coherent data detection in massive MU-MIMO wireless systems and (ii) joint channel estimation and data detection (JED) in large  SIMO wireless systems.}
Our detector builds upon semidefinite relaxation~\cite{luo2010semidefinite}, which enables near maximum-likelihood (ML) data detection performance at polynomial (in the number of transmit antennas or time slots) complexity for systems that communicate with low-rate, constant-modulus modulation schemes~\cite{jalden2008diversity}. 
\rev{TASER approximates the semidefinite relaxation (SDR) formulation of both the coherent ML and the JED ML problems using a Cholesky factorization, and solves the resulting non-convex problem using a preconditioned forward-backward splitting (FBS) procedure~\cite{BT09,GSB14}.}
We provide theoretical convergence guarantees for our algorithm, and we develop a corresponding systolic array that enables high-throughput data detection at low silicon area in an energy-efficient manner.
We provide reference VLSI implementation results for a Xilinx Virtex-7 FPGA and for a 40\,nm CMOS technology, and we perform an extensive comparison in terms of performance and complexity with recently-proposed data detector implementations for large MU-MIMO wireless systems~\cite{Wu2014,Yin2015,Wu2016,WZXXY2016,YWWDCS14b}.

\subsection{Relevant Prior Art}
\subsubsection{Data detection in large MU-MIMO}
The literature on data detection in large (or massive) MU-MIMO wireless systems describes only a few algorithms that are able to achieve near-optimal error-rate performance \cite{vardhan2008,shengwu2014,jeon2015optimality}. For these algorithms, however, no hardware designs have been described in the open literature. So far, only sub-optimal, linear data detection algorithms have  been integrated  in FPGAs~\cite{Wu2014,Yin2015,Wu2016,WZXXY2016} or ASICs~\cite{YWWDCS14b}. Unfortunately, such linear data detection algorithms suffer from a significant error-rate performance loss in \rev{``square'' systems,} where the number of users is comparable to the number of BS antennas \cite{Wu2014}. In contrast, the proposed TASER algorithm achieves near-optimal error-rate  performance, even in symmetric large MU-MIMO systems where the BS-to-user-antenna ratio is one. 

\subsubsection{SDR-based data detection}
\rev{SDR is a well-known technique for achieving near-ML performance in multi-user code division multiple access (MU-CDMA) \cite{tan2001application,ma2002quasi} and traditional, small-scale MIMO \cite{steingrimsson2003soft,jalden2003semidefinite,wiesel2005semidefinite,sidiropoulos2006,yang2007mimo,ma2009equivalence,luo2010semidefinite,wai2011cheap} wireless systems.}
Most results on SDR-based data detection rely on computationally inefficient, general-purpose convex solvers that require either the  solution to a linear system or an eigenvalue decomposition per iteration---both of these operations entail prohibitive complexity when implemented in hardware.
As an exception, the algorithm in~\cite{wai2011cheap} relies on block-coordinate descent, which avoids the solution to a full linear system per iteration. While computationally efficient, this method exhibits stringent data dependencies, requires a high number of multiplications per iteration, and consumes a large amount of memory, which renders corresponding VLSI designs inefficient. 
TASER, in contrast, is highly parallelizable and hardware friendly, and is---to the best of our knowledge---the first SDR-based data detector that has been successfully implemented in VLSI. 

\subsubsection{Joint channel estimation and data detection}
JED is known to significantly outperform traditional data detection schemes that separate channel estimation from data detection.
We believe that JED is a promising solution for large cellular systems, where pilot-contamination  (i.e., pilot-based training for users in adjacent cells interferes with the training pilots in the current cell) poses a fundamental performance bottleneck~\cite{Marzetta2010}. 
\rev{The computational complexity of exact JED via an exhaustive search grows exponentially in the number of transmission time slots \cite{vikalo2006efficient}.} Hence, sphere-decoding (SD)-based methods have been proposed for JED in the SIMO \cite{stoica2003space,vikalo2006efficient,stoica2003joint,alshamary2015optimal} and MIMO \cite{xu2008exact} literature to reduce the computational complexity. Nevertheless, the design of hardware implementations of high-throughput sphere-decoders is challenging, and most existing designs  only achieve a few hundred Mb/s for small MIMO systems (see \cite{burg2005vlsi,studer2010vlsi} for more details on SD-based data detectors). In addition---to the best of our knowledge---no hardware design for JED has been proposed in the open literature. 
\rev{In this paper, we show that (i) JED can be performed using SDR and (ii) TASER enables near-optimal, high-throughput JED for realistic large SIMO wireless systems.}

\subsection{Notation}

Lowercase boldface letters stand for column vectors; uppercase boldface letters denote matrices. For a matrix~\bA, we denote its transpose, adjoint, and trace by $\bA^T$, $\bA^H$, and $\mathrm{Tr}(\bA)$, respectively.
We use~$A_{k,\ell}$ for the entry in the $k$th row and $\ell$th column of the matrix $\bA$; the $k$th entry of a vector $\bma$ is denoted by~$a_k=[\bma]_k$. The Frobenius norm of the matrix $\bA$ is $\|\bA\|_F=\sqrt{\sum_{k,\ell}|A_{k,\ell}|^2}$ and the $\ell_2$-norm of the vector~$\bma$ is $\|\bma\|_2=\sqrt{\sum_{k}|a_k|^2}$. 
The identity matrix and all-ones vector are denoted by $\bI$ and $\bOne$, respectively. 
The real and imaginary part of a complex-valued matrix $\bA$ are denoted by $\Re(\bA)$ and $\Im(\bA)$, respectively. 

\subsection{Paper Outline}
The rest of the paper is organized as follows. \secref{sec:system} introduces the large MU-MIMO and SIMO system models and discusses coherent ML  data detection as well as JED. \secref{sec:algo} introduces the TASER algorithm and provides a theoretical convergence analysis. \secref{sec:impl} details our systolic architecture. \secref{sec:results} shows reference implementation results and provides a comparison with existing data detectors for large MU-MIMO. \rev{Concluding remarks are presented in \secref{sec:conclusions}.}


\section{Data Detection in Large Multi-Antenna Wireless Systems}
\label{sec:system}

The algorithm and VLSI \rev{designs} proposed in this paper are suitable for two application scenarios: (i) coherent data detection in large MU-MIMO systems and (ii) JED in large SIMO systems. 
We next describe the corresponding system models and show how both problems can be relaxed to a semidefinite program (SDP) of the same form.

\subsection{Coherent Data Detection for Large MU-MIMO Systems}

The first application scenario is data detection in the large (or massive) MU-MIMO wireless uplink with $\MR$ BS antennas and~$\MT$ user antennas. 
We consider the standard input-output relation to model a narrow-band\footnote{Our algorithm and circuit designs are also suitable for frequency-selective channels in combination with orthogonal frequency-division multiplexing (OFDM), where we consider the same input-output relation per subcarrier.} MIMO wireless channel~\cite{gesbert2003theory}: 
$\mathbf{y}=\mathbf{H}\mathbf{s}+\mathbf{n}$.
Here, $\mathbf{y}\in\mathbb{C}^\MR$ is the BS receive-vector, $\bH\in\mathbb{C}^{\MR\times\MT}$ is the MIMO channel matrix, $\bms\in\setO^{\MT}$ is the transmit vector containing the data symbols from all users ($\setO$ refers to the constellation set), and $\bmn\in\mathbb{C}^{\MR}$ is i.i.d.\ circularly-symmetric Gaussian with variance~$\No$ per entry. 
Assuming that an estimate of the channel matrix~$\bH$ was acquired during a dedicated training phase, \rev{ML} data detection corresponds to the following  problem \cite{Paulraj2008}:
\begin{align} \label{eq:MLproblem}
\hat{\bms}^\text{ML} = \argmin_{\bms\in\setO^U} \|\bmy-\bH\bms\|_2.
\end{align}
\rev{A number of computationally efficient sphere-decoding algorithms have been proposed to solve the combinatorial problem in~\eqref{eq:MLproblem} for conventional, small-scale MIMO systems~\cite{agrell2002closest,HB03,burg2005vlsi,jsac07}.} Unfortunately, the worst-case and average computational complexity of these exact methods still scales exponentially with the number of users~$\MT$~\cite{jalden2005complexity,seethaler2011complexity}. 
For large MU-MIMO systems, where the BS-to-user-antenna ratio \rev{exceeds a factor of two,} recently-developed linear algorithms have been shown to achieve near-ML performance~\cite{hoydis2013massive,larsson2014massive,Wu2014}. For systems with a large number of users where the BS-to-user-antenna ratio is close to one, however, linear methods are known to deliver  poor error-rate performance~\cite{Wu2014}.

\sloppy

\rev{To enable near-optimal error-rate performance at low complexity for such scenarios, we can relax the ML problem in \eqref{eq:MLproblem} into an SDP~\cite{luo2010semidefinite}.} This relaxation step requires us to reformulate the ML detection problem as follows. By assuming constant-modulus QAM constellations, such as BPSK and QPSK, we first perform the real-valued decomposition of the system model $\overline{\bmy}=\overline{\bH}\overline{\bms}+\overline{\bmn}$ using the following definitions:
\begin{align*}
& \overline{\bmy} \!=\! \!\left[\begin{array}{c}
 \!\! \Re(\bmy)  \!\! \\
 \!\! \Im(\bmy)   \!\! 
\end{array}\right]\!,  \quad 
& & \overline{\bH} \!=\!  \!\left[\begin{array}{cc}
 \!\! \Re(\bH)  \!\!\!\! &~~-\Im(\bH)  \!\! \\
 \!\! \Im(\bH)  \!\!\!\! &~~\Re(\bH)  \!\!
\end{array}\right]\!,  \\
& \overline{\bms} \!=\!  \!\left[\begin{array}{c}
 \!\! \Re(\bms)  \!\!  \\
 \!\! \Im(\bms)  \!\!  
\end{array}\right]\!,\quad  
& & \overline{\bmn} \!=\! \!\left[\begin{array}{c}
 \!\! \Re(\bmn)  \!\!  \\
 \!\! \Im(\bmn)  \!\!  
\end{array}\right]\!.
\end{align*}
\rev{This decomposition enables us to reformulate the ML problem in~\eqref{eq:MLproblem} into the following equivalent form:}
\begin{align} \label{eq:reformulatedMLproblem}
\bar{\bms}^\text{ML}  =\argmin_{\tilde\bms\in\setX^{N}} \, \mathrm{Tr}(\tilde\bms^T\bT\tilde\bms).
\end{align}
For QPSK, the matrix $\bT=[\overline{\bH}^T\overline{\bH},-\overline{\bH}^T\overline{\bmy};-\overline{\bmy}^T\overline{\bH},\overline{\bmy}^T\overline{\bmy}]$ is of dimension $N\times N$ with $N=2\MT+1$ and $\setX\in\{-1,+1\}$ with $\tilde\bms=[\Re(\bms); \Im(\bms); 1]$.
The solution $\bar{\bms}^\text{ML} $ can then be converted back into the complex-valued ML solution as $[\hat\bms^\text{ML}]_i=[\bar{\bms}^\text{ML}]_i+j[\bar{\bms}^\text{ML}]_{i+U}$ for $i=1,\ldots,U$. 
For BPSK, the matrix $\bT=[\underline{\bH}^T\underline{\bH},-\underline{\bH}^T\overline{\bmy};-\overline{\bmy}^T\underline{\bH},\overline{\bmy}^T\overline{\bmy}]$ is of dimension $N\times N$ with $N=U+1$ and  $\tilde\bms=[\Re(\bms); 1]$. Here, we define  the $2B\times U$ matrix $\underline\bH=[  \Re(\bH) ;  \Im(\bH)]$.
Since $\Im(\bms)=\bZero$ in this case, $[\hat\bms^\text{ML}]_i=[\bar{\bms}^\text{ML}]_i$ for $i=1,\ldots,U$. 
In \secref{sec:sdrproblem}, we detail how the problem \rev{in} \eqref{eq:reformulatedMLproblem} can be relaxed into \rev{an SDP.}

\fussy

\subsection{Joint Channel Estimation and Data Detection}
\label{sec:SDRJED}

The second application scenario is \rev{JED} in large SIMO wireless uplink systems where one single-antenna user communicates over $K+1$ time slots with $\MR$ BS antennas. 
We use the following input-output relation to model the (narrow-band and flat-fading) SIMO wireless channel~\cite{stoica2003space,vikalo2006efficient,stoica2003joint,alshamary2015optimal}:
$\bY=\bmh\bms^H+\bN$.
Here, $\mathbf{Y}\in\mathbb{C}^{\MR\times (K+1)}$ contains the received vectors acquired over all $K+1$ time slots, $\bmh\in\mathbb{C}^{\MR}$ is the unknown SIMO channel vector that is assumed to be block fading, i.e., constant over $K+1$ time slots, $\bms^H\in\setO^{1\times (K+1)}$ is the transmit vector containing the data symbols from all $K+1$ time slots, and $\bN\in\mathbb{C}^{\MR \times (K+1)}$ is i.i.d.\ circularly-symmetric Gaussian with variance~$\No$ per entry. 
By assuming that~$\bmh$ is a deterministic but unknown channel vector with unknown prior statistics, we can formulate the following ML JED  problem~\cite{alshamary2015optimal}:
\begin{align} \label{eq:JEDproblem}
\{\hat{\bms}^\text{JED},\hat{\bmh}\} = \argmin_{\bms\in\setO^{K+1},\,\bmh\in\mathbb{C}^\MR} \|\bY-\bmh\bms^H\|_F.
\end{align}
It is important to note that there exists a phase ambiguity between both outputs of JED because~$\hat\bmh e^{j\phi}$ is also a solution whenever $\hat\bms^\text{JED}e^{j\phi}\in\setO^{K+1}$ for some phase $\phi$. As a consequence, one may convey information either as phase changes in the vector $\bms^H$ over time slots (known as differential encoding) or ``pin down'' the phase of one entry of the transmit vector; in what follows, we assume that the first transmitted entry is known to the receiver.\footnote{For SIMO systems, this approach resembles that of pilot-based transmission---the difference to JED is, however, that we also use all  transmitted information symbols to improve the channel estimate and hence, to improve the error-rate performance. } 

By assuming that the entries in $\bms$ are constant modulus (e.g., BPSK or QPSK), the ML JED estimate of the transmit vector reduces to \cite{alshamary2015optimal}:
\begin{align} \label{eq:reformulatedproblem}
\hat{\bms}^\text{JED} = \argmax_{\bms\in\setO^{K+1}}\, \|\bY\bms\|_2,
\end{align}
and $\hat{\bmh}=\bY\hat{\bms}^\text{JED}$ is the estimate of the channel vector. 
For a small number of time slots~$K+1$, the problem \rev{in}~\eqref{eq:reformulatedproblem} can be solved exactly at low average complexity using SD methods~\cite{alshamary2015optimal}. For systems with a large number of time slots, however, the computational complexity of such algorithms becomes prohibitive.
In contrast to \rev{the} coherent ML detection \rev{problem described in}~\eqref{eq:reformulatedMLproblem}, linear methods that approximate~\eqref{eq:reformulatedproblem} are unavailable  as relaxing the constraint $\bms\in\setO^{K+1}$ to $\bms\in\mathbb{C}^{K+1}$ causes the entries of $\bms$ to grow without bound.

\sloppy

We now show how the ML JED problem \rev{in} \eqref{eq:reformulatedproblem} can be transformed into the same structure of the coherent ML problem \rev{in}~\eqref{eq:reformulatedMLproblem}, which enables SDR. 
Since the receiver is assumed to know the first transmitted symbol $s_0$, we rewrite the objective in \eqref{eq:reformulatedproblem} as $\|\bY\bms\|_2=\|\bmy_0s_0+\bY_r\bms_r\|_2$ , where $\bY_r=[\bmy_1,\ldots,\bmy_K]$ and $\bms_r=[s_1,\ldots,s_K]^T$. 
Similarly to the coherent ML problem, we perform the real-valued decomposition by defining: 
\begin{align*}
\overline{\bmy} = \!\left[\begin{array}{c}
 \!\! \Re(\bmy_0s_0)  \!\!  \\
 \!\! \Im(\bmy_0s_0)  \!\!  
\end{array}\right]\!,  \,\,
\overline{\bH} = \!\left[\begin{array}{cc}
 \!\! \Re(\bY_r)  \!\!\!\!  &~~-\Im(\bY_r) \!\!  \\
 \!\! \Im(\bY_r)  \!\!\!\!  &~~\Re(\bY_r)  \!\! 
\end{array}\right]\!,  \,\,  
\overline{\bms} = \!\left[\begin{array}{c}
 \!\! \Re(\bms_r) \!\!  \\
 \!\! \Im(\bms_r)  \!\!  
\end{array}\right]\!,
\end{align*}
which allows us to rewrite $\|\bmy_0s_0+\bY_r\bms_r\|_2=\|\overline{\bmy}+\overline{\bH} \overline{\bms} \|_2$. 
We can now reformulate~\eqref{eq:reformulatedproblem} in a form that is equivalent to \eqref{eq:reformulatedMLproblem} as 
\begin{align} \label{eq:reformulatedJEDproblem}
\bar{\bms}^\text{JED}  =\argmin_{\tilde\bms\in\setX^{N}} \, \mathrm{Tr}(\tilde\bms^T\bT\tilde\bms).
\end{align}
For QPSK, the matrix $\bT=-[\overline{\bH}^T\overline{\bH},\overline{\bH}^T\overline{\bmy};\overline{\bmy}^T\overline{\bH},\overline{\bmy}^T\overline{\bmy}]$ is of dimension $N\times N$ with $N=2K+1$  and $\setX\in\{-1,+1\}$ with $\tilde\bms=[\Re(\bms_r); \Im(\bms_r); 1]$; for BPSK, the matrix $\bT=-[\underline{\bH}^T\underline{\bH},\underline{\bH}^T\overline{\bmy};\overline{\bmy}^T\underline{\bH},\overline{\bmy}^T\overline{\bmy}]$ is of dimension $N\times N$ with $N=K+1$ and  $\tilde\bms=[\Re(\bms_r); 1]$. Here, we define the $2B\times K$ matrix as $\underline\bH=[  \Re(\bY_r) ;  \Im(\bY_r)]$. 
Analogously to the coherent ML case, the solution $\bar{\bms}^\text{JED} $ can then be used to construct the complex-valued ML JED solution of \eqref{eq:reformulatedproblem}.

\fussy

\rev{Evidently, the problems described in \eqref{eq:reformulatedMLproblem} and \eqref{eq:reformulatedJEDproblem} exhibit the same structure---we next show how both of these problems can be solved approximately using the same SDR-based method.}

\subsection{Semidefinite Relaxation of the Problems \rev{in} \eqref{eq:reformulatedMLproblem} and \eqref{eq:reformulatedJEDproblem}}
\label{sec:sdrproblem}

SDR is a well-known approximation to the coherent ML problem~\cite{steingrimsson2003soft,luo2010semidefinite,tan2001application,ma2002quasi} and enables significantly lower (i.e., polynomial) computational complexity for systems employing BPSK and QPSK constellations.\footnote{SDR methods for higher-order constellations (such as 16-QAM) exist; see, e.g.,~\cite{wiesel2005semidefinite, sidiropoulos2006} for more details.} SDR not only provides near-ML performance, but also achieves the same diversity order as the ML detector~\cite{jalden2008diversity}. 
In contrast, the use of SDR for solving the ML JED problem as proposed in \secref{sec:SDRJED}  appears to be novel.

SDR-based data detection starts by reformulating the problems \rev{in} \eqref{eq:reformulatedMLproblem} and \eqref{eq:reformulatedJEDproblem} in the following equivalent form~\cite{luo2010semidefinite}:
\begin{align} \label{eq:SDRproblem}
\widehat{\bS} =\argmin_{\bS\in\mathbb{R}^{N\times N}}\, \mathrm{Tr}(\bT\bS) \,\, 
\text{ subject to } \mathrm{diag}(\bS) = \bOne, \, \mathrm{rank}(\bS) = 1.
\end{align}
Here, we used $\mathrm{Tr}(\bms^T\bT\bms)=\mathrm{Tr}(\bT\bms\bms^T)=\mathrm{Tr}(\bT\bS),$ where $\bS=\bms\bms^T$ is a rank-1 matrix and $\bms\in\setX^N$ is of appropriate dimension $N$. 
\rev{Unfortunately, the rank-one constraint in  \eqref{eq:SDRproblem} makes this problem at least as hard as the original two problems  in~\eqref{eq:reformulatedMLproblem} and \eqref{eq:reformulatedJEDproblem}.}
The key idea of SDR is to relax this rank constraint, which results in \rev{an} SDP that can be solved in polynomial time. 
Specifically, SDR applied to  \eqref{eq:SDRproblem} results in the following well-known optimization problem~\cite{luo2010semidefinite}: 
\begin{align} \label{eq:relaxedSDRproblem}
\widehat{\bS} =\argmin_{\bS\in\mathbb{R}^{N\times N}}\, \mathrm{Tr}(\bT\bS) \,\, 
\text{ subject to } \mathrm{diag}(\bS) = \bOne, \, \bS \succeq 0,
\end{align}
where the constraint $\bS \succeq 0$ ensures that the matrix~$\bS$ is positive semidefinite (PSD).  
If the result of the problem \rev{in} \eqref{eq:relaxedSDRproblem} is rank one, then $\widehat\bS=\hat\bms\hat\bms^H$ where $\hat\bms$ contains the exact estimate to  \eqref{eq:reformulatedMLproblem} and \eqref{eq:reformulatedJEDproblem}, i.e., SDR solves the original problem optimally. If the resulting matrix $\widehat\bS$ has a higher rank, then an estimate of the ML solution can be obtained by taking the signs of the leading eigenvector of~$\widehat\bS$ or by using  randomization schemes~\cite{luo2010semidefinite}. 

While~\eqref{eq:relaxedSDRproblem} can be solved exactly using interior-point methods~\cite{luo2010semidefinite}, \rev{such algorithms typically require (i) a large number of iterations, where each iteration requires either the solution to a linear system or an eigenvalue decomposition,} and (ii) high numerical precision, which renders fixed-point hardware challenging.  
We believe that these are the main reasons why---until now---\emph{no} VLSI  design of an SDR-based data detector has been proposed in the open literature. 


\section{TASER: \mbox{Triangular Approximate Semidefinite Relaxation}}
\label{sec:algo}

We now detail TASER, a novel algorithm for approximately solving the SDP \rev{presented in} \eqref{eq:relaxedSDRproblem} \rev{using hardware accelerators.}

\subsection{Triangular SDP Formulation}
The key idea of TASER builds on the fact that real-valued PSD matrices $\bS \succeq 0$ can be factorized using the Cholesky decomposition $\bS=\bL^T\bL$, where~$\bL$ is an $N\times N$ lower-triangular matrix with non-negative entries on the main diagonal. With this result, we can reformulate the SDP \rev{shown in} \eqref{eq:relaxedSDRproblem} \rev{using} the following equivalent form:
\begin{align} \label{eq:triangularproblem}
\widehat{\bL} = \argmin_\bL\, \mathrm{Tr}(\bL\bT\bL^T) \,\,
\text{subject to } \|\bml_k\|_2 =1, \forall k.
\end{align}
Here, we replaced the constraint $\mathrm{diag}(\bL^T\bL) = \bOne$ of \eqref{eq:relaxedSDRproblem} by the equivalent $\ell_2$-norm constraint on the $k$th column $\bml_k=[\bL]_k$. 
To obtain (approximate) solutions to the  ML or JED ML problems in either \eqref{eq:reformulatedMLproblem} or \eqref{eq:reformulatedJEDproblem}, respectively, we can take the signs of the last row of the  solution matrix~$\widehat{\bL}$ from~\eqref{eq:triangularproblem}. 
\rev{In fact, if the solution matrix $\widehat\bS=\widehat{\bL}^T\widehat{\bL}$ has rank one (this implies that TASER identified the ML solution), then the last row of $\widehat{\bL}$ must contain the associated eigenvector as this is the only vector of dimension $N$. If, however, the solution matrix $\widehat\bS=\widehat{\bL}^T\widehat{\bL}$ has a higher rank, an approximate ML solution must be extracted somehow. As suggested in \cite{bach2005predictive,harbrecht2012low}, taking the last row of the Cholesky decomposition results in accurate rank-one approximations of PSD matrices. Our own simulations in \secref{sec:simresults} confirm that this approximation yields excellent error-rate performance, i.e., close to that of the exact SDR detector followed by an eigenvalue decomposition.
We emphasize that this approach avoids costly eigenvalue decompositions and randomization strategies that are required by conventional solvers that  compute~$\widehat\bS$ exactly using SDR.}

\subsection{Forward-Backward Splitting}
\label{sec:FBS}
We now develop a computationally efficient algorithm that directly solves the triangular SDP formulation in \eqref{eq:triangularproblem}. 
Unfortunately, the problem \rev{described in} \eqref{eq:triangularproblem} is non-convex in the matrix~$\bL$ and hence, computing an optimal solution is difficult. For TASER, we apply FBS~\cite{GSB14}, a computationally efficient method to solve convex optimization problems, to the non-convex problem \rev{in}~\eqref{eq:triangularproblem}. While this approach is not guaranteed to converge to the optimal solution of the non-convex problem \rev{posed by} \eqref{eq:triangularproblem}, we show in  \secref{sec:convergence} that TASER converges to a critical point of~\eqref{eq:triangularproblem}. Furthermore, our simulation results in \secref{sec:results} demonstrate near-ML error-rate performance.

FBS is an efficient, iterative method to solve convex optimization problems of the form $\hat{\bmx}=\argmin_{\bmx} f(\bmx) + g(\bmx)$, where the function~$f$ is smooth and convex,  and $g$ is convex but not necessarily smooth or bounded. FBS performs the following steps for $t=1,2,\ldots$~\cite{BT09,GSB14}: 
\begin{align*}
\bmx^{(t)} = \mathrm{prox}_g(\bmx^{(t-1)}-\tau^{(t-1)}\nabla f(\bmx^{(t-1)}) ;\tau^{(t-1)})
\end{align*}
until convergence or a maximum number of iterations $t_\text{max}$ is reached.
Here, $\{\tau^{(t)}>0\}$ is a suitably-chosen sequence of step size parameters, $\nabla f(\bmx)$ is the gradient of the function~$f$, and the so-called proximal operator for the function $g$ is defined as~\cite{BT09,GSB14}: 
  \begin{align} \label{eq:proximal}
\mathrm{prox}_g(\bmz;\tau) = \argmin_{\bmx} \,\bigl\{\tau g(\bmx) + \textstyle \frac{1}{2}\|\bmx-\bmz\|_2^2\bigr\}.
\end{align}

In order to approximately solve \eqref{eq:triangularproblem} using FBS, we define $f(\bL)=\mathrm{Tr}(\bL\bT\bL^T)$ and incorporate the constraint using $g(\bL)=\chi(\|\bml_k\|_2 =1, \forall k)$, where $\chi$ is the characteristic function (which is zero if the constraint is met and infinity otherwise). The gradient is given by $\nabla f(\bL)=\mathrm{tril}(2\bL\bT)$, where $\mathrm{tril}(\cdot)$ extracts the lower-triangular part of the argument. Even though the function $g$ is non-convex, the proximal operator \rev{defined in} \eqref{eq:proximal} has a closed form solution and is given by $\mathrm{prox}_g(\bml_k;\tau)=\bml_k/\|\bml_k\|_2$, $\forall k$; in words, the proximal operator simply rescales the columns of $\bL$ to have unit $\ell_2$-norm.

In order to arrive at a hardware-friendly algorithm, we avoid sophisticated \rev{step size} rules such as the ones proposed in~\cite{GSB14}. \rev{We use a fixed step size proportional to the reciprocal of the Lipschitz constant of the gradient $\nabla f(\bL)$ as proposed in~\cite{BT09}.} Our step size corresponds to $\tau=\alpha/\|\bT\|_2$, where~$\|\bT\|_2$ is the spectral norm of the matrix~$\bT$ and $0<\alpha<1$ is a system-dependent tuning parameter that we use to improve the empirical convergence rate when running TASER for a small number of iterations (see \secref{sec:convergence} for a discussion).  

\subsection{Jacobi Preconditioning} \label{sec:precon}
To improve the convergence rate of FBS, we precondition the problem \rev{presented in}~\eqref{eq:triangularproblem}. To this end, we compute a diagonal scaling matrix $\bD=\mathrm{diag}(\sqrt{T_{1,1}},\ldots,\sqrt{T_{M,M}})$, which we use to scale the matrix $\bT$ as $\widetilde{\bT}=\bD^{-1}\bT\bD^{-1}$ so that $\widetilde{\bT}$ has an all-ones main diagonal. The purpose of this so-called Jacobi preconditioner is to improve the condition number of the original PSD matrix~$\bT$~\cite{benzi2002preconditioning}. 
We then run FBS to recover a normalized version\footnote{In the conference paper~\cite{CGS2016a}, we mistakenly stated  $\widetilde{\bL}=\bD\bL$.} of the lower-triangular matrix $\widetilde{\bL}=\bL\bD$. We emphasize that preconditioning also requires us to modify the proximal operator, which turns out to be $\text{prox}_{\tilde{g}}(\tilde{\bml}_k)=D_{k,k}\tilde{\bml}_k/\|\tilde{\bml}_k\|_2 $, where $\tilde{\boldsymbol \ell}_k$ is the $k$th column of $\widetilde{\bL}$. Since we only rely on the signs of the last row of $\widehat{\bL}$ to obtain an estimate of the ML problems, we can simply take the signs of the normalized triangular matrix $\widetilde{\bL}$.

\subsection{The TASER Algorithm}
\begin{algorithm}[t]
\caption{TASER \label{alg:TASER}}
\begin{algorithmic}[1]
\STATE \textbf{inputs:} $\widetilde{\bT}$,  $\bD$, and $\tau=\alpha/\|\widetilde{\bT}\|_2$
\STATE \textbf{initialization:} $\widetilde{\bL}^{(0)} = \mathbf{D}$
\FOR{$t = 1,\ldots, t_\text{max}$}
\STATE $\bV^{(t)} = \widetilde{\bL}^{(t-1)} - \mathrm{tril}(2\tau\widetilde{\bL}^{(t-1)}\widetilde{\bT})$
\STATE $\widetilde{\bL}^{(t)} = \mathrm{prox}_{\tilde{g}}(\bV^{(t)})$
\ENDFOR
\STATE \textbf{outputs:} $\overline{s}_k = \mathrm{sign}(\widetilde{L}^{(t_\text{max})}_{N,k}), k=1,\ldots, N-1$
\end{algorithmic}
\end{algorithm}
We now have all the necessary ingredients for TASER, which is  summarized in~\algref{alg:TASER}. The inputs of the algorithm are the preconditioned matrix $\widetilde{\bT}$, the scaling matrix $\bD$, and the step size $\tau$. We  initialize the FBS procedure by $\widetilde{\bL}^{(0)} = \mathbf{D}$, which resulted in excellent performance for all considered scenarios. The main loop of TASER then performs the gradient and proximal steps as discussed in Sections~\ref{sec:FBS} and~\ref{sec:precon} until a maximum number of iterations $t_\text{max}$ is reached. For most situations, only a few iterations are sufficient to achieve near-ML error rate performance (see \secref{sec:results} for numerical results). The TASER algorithm computes an estimate for the coherent ML and ML JED problems \rev{in} \eqref{eq:reformulatedMLproblem} and \eqref{eq:reformulatedJEDproblem}, respectively.

\subsection{Convergence Theory}
\label{sec:convergence}

The TASER algorithm tries to solve a non-convex problem using FBS.  Hence, our approach raises two questions, namely (i) whether we should expect the minimization algorithm to converge, and (ii) whether the local minima of the non-convex problem correspond to minimizers of the convex SDP. We now investigate both of these questions.

While the application of FBS for minimizing the proposed semidefinite program is new, the convergence of FBS for non-convex problems is well-studied.  Reference~\cite{attouch2013convergence} presents conditions for which FBS converges with non-convex constraints.  In particular, the problem must be semi-algebraic, meaning both the constraints and the epigraph of the objective can be written as the set of solutions to a system of polynomial equations.\footnote{The authors of \cite{attouch2013convergence} actually prove results for the broader class of Kurdyka-\L{}ojasiewicz functions, of which semi-algebraic functions are a special case.}   Fortunately, such results apply to the formulation \eqref{eq:triangularproblem}. The following result makes this statement rigorous.

\begin{prop}\label{prop:converge}
Suppose we apply FBS (Algorithm \ref{alg:TASER}) to solve the problem \rev{stated in} \eqref{eq:triangularproblem}. If we use the \rev{step size} $\tau=\alpha/\|\bT\|_2$ with $0<\alpha<1,$ then the sequence of iterates $\{\bL^{(t)}\}$ converges to a critical point of the problem \rev{in} \eqref{eq:triangularproblem}.  
\end{prop}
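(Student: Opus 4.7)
The plan is to realize Algorithm~\ref{alg:TASER} as an instance of the non-convex forward-backward splitting iteration $\bL^{(t)} = \mathrm{prox}_g(\bL^{(t-1)} - \tau\nabla f(\bL^{(t-1)}))$ applied to the composite objective
\begin{align*}
F(\bL) = \underbrace{\mathrm{Tr}(\bL\bT\bL^T)}_{f(\bL)} + \underbrace{\chi_{\setS}(\bL)}_{g(\bL)},
\end{align*}
where $\setS$ is the set of $N\times N$ lower-triangular matrices with unit $\ell_2$-norm columns, and then to invoke the Attouch--Bolte--Svaiter convergence theorem for proximal-gradient methods on Kurdyka--\L{}ojasiewicz (KL) objectives~\cite{attouch2013convergence}. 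That theorem certifies convergence of the full iterate sequence to a critical point of $F$ (and hence of~\eqref{eq:triangularproblem}) under four hypotheses: (a) $f\in C^1$ with Lipschitz gradient; (b) $g$ is proper and lower semi-continuous; (c) $F$ has the KL property; and (d) the step size is strictly below $1/L$, where $L$ is the Lipschitz constant of $\nabla f$.

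First I would verify (a) by direct computation: $\nabla f(\bL) = 2\bL\bT$, whose restriction to the lower-triangular subspace via $\mathrm{tril}$ (a Frobenius-norm contraction) is precisely the gradient used in line~4 of the algorithm. Since $\|(\bL_1-\bL_2)\bT\|_F \leq \|\bT\|_2\|\bL_1-\bL_2\|_F$, the gradient is Lipschitz with constant at most $L\leq 2\|\bT\|_2$. Hypotheses (b) and (c) would then be handled simultaneously via semi-algebraicity: $\setS$ is cut out by the polynomial relations $L_{ij}=0$ for $i<j$ together with $\sum_i L_{ik}^2 = 1$ for each $k$, so $\setS$ is a semi-algebraic set, $g=\chi_{\setS}$ is proper and lower semi-continuous, and the sum $F=f+g$ (with $f$ a polynomial) is semi-algebraic, which automatically implies the KL property by Section~4 of~\cite{attouch2013convergence}.

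Boundedness of the trajectory is free of charge: for every $t\ge 1$ the proximal step forces $\|\bml^{(t)}_k\|_2 = 1$ for all $k$, so $\|\bL^{(t)}\|_F = \sqrt{N}$ and the iterates lie in a compact set. Applying Theorem~5.3 of~\cite{attouch2013convergence} then yields full-sequence convergence of $\{\bL^{(t)}\}$ to a critical point of $F$, which is exactly the conclusion of Proposition~\ref{prop:converge}.

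The main obstacle I expect is the step-size bookkeeping. The off-the-shelf descent lemma only guarantees convergence for $\tau<1/L$ with $L=2\|\bT\|_2$, which on its face licenses only $\alpha<1/2$ rather than the stated $\alpha<1$. Closing this factor-of-two gap requires exploiting additional structure, most naturally the row-separability of $f$: viewing $\bL$ row-wise one has $f(\bL) = \sum_i (\bml^r_i)^T\bT\bml^r_i$ with block-diagonal Hessian $2\bT\oplus\cdots\oplus 2\bT$, and the column-wise proximal operator decouples across the columns of $\bL$. Combining these two observations should sharpen the sufficient-decrease inequality by a factor of two relative to the generic vectorized bound; once that refinement is in place, the remaining hypotheses above apply essentially verbatim and the Attouch--Bolte--Svaiter machinery delivers the claim.
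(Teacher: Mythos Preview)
Your approach is essentially the same as the paper's: both arguments cast the iteration as non-convex FBS on $f(\bL)=\mathrm{Tr}(\bL\bT\bL^T)$ plus the indicator of the unit-column constraint, observe that everything in sight is semi-algebraic, and then invoke Theorem~5.3 of \cite{attouch2013convergence}. Your write-up is in fact more careful than the paper's: you explicitly verify properness/lower semi-continuity of $g$, the KL property, and boundedness of the iterates, whereas the paper only states the semi-algebraicity and quotes the theorem.

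Where you and the paper diverge is precisely the step-size bookkeeping you flag. The paper asserts that ``the Lipschitz constant is merely the spectral radius of $\bT$,'' i.e., $L=\|\bT\|_2$, which immediately yields the stated range $0<\alpha<1$. Your computation $\nabla f(\bL)=2\bL\bT$ and hence $L\le 2\|\bT\|_2$ is the correct one, so the discrepancy you noticed is real: the paper simply glosses over this factor of two rather than resolving it. Your proposed remedy, however, is not convincing as written. Row-separability of $f$ and column-separability of the proximal map do not interact in a way that sharpens the descent lemma; the Hessian of $f$ on the vectorized variable is $\bI\otimes 2\bT$, whose largest eigenvalue is $2\|\bT\|_2$, and that bound is tight regardless of how the prox decomposes. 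So the ``combine these two observations'' step would not go through, and without a genuinely different argument you are left with $\alpha<1/2$. In short: your proof matches the paper's for $\alpha<1/2$, and neither your sketch nor the paper's proof actually justifies the full range $\alpha<1$.
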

\begin{proof}
The function $\|\boldsymbol\ell_k\|^2_2$  is a polynomial in the entries of~$\bL_{}$. The constraint set in  \eqref{eq:triangularproblem}  is the solution to the polynomial system $\|\boldsymbol\ell_k\|^2_2=1$, $\forall k$ and is thus semi-algebraic.  The objective function, being a quadratic form, is also trivially semi-algebraic.  By Theorem~5.3 of \cite{attouch2013convergence}, we know that the sequence of iterates $\{{\bL}^{(t)}\}$ converges, provided the \rev{step size} is bounded from above by the inverse of the Lipschitz constant of the gradient of the objective.  For our quadratic objective, the Lipschitz constant is merely the spectral radius ($\ell_2$-norm) of ${\bT}.$
\end{proof}

Note that the Jacobi preconditioner in \secref{sec:precon} results in a problem of the same form as~\eqref{eq:triangularproblem}, but with constraints of the form $\|\tilde{\boldsymbol\ell}_k\|^2_2 = D_{k,k}^2$ and the step size $\tau=\alpha/\|\widetilde{\bT}\|_2$. Consequently, Proposition~\ref{prop:converge} still applies. Note that this result has the caveat that we are not guaranteed to find a (global) minimizer, but rather stationary points, although we generally observe minimizers in practice.  Nonetheless, this convergence guarantee is considerably stronger than what is known for other low-complexity SDP methods, such as those inspired by Burer and Montiero \cite{burer2003nonlinear}, \rev{which rely on non-convex augmented Lagrangian schemes for which no guarantees currently exist.}

The second question to ask is whether the local minima of our non-convex formulation \rev{in}  \eqref{eq:triangularproblem} correspond to minimizers of the convex SDP \rev{shown in} \eqref{eq:relaxedSDRproblem}.  Interestingly, when the factors $\bL$ and $\bL^T$ are not constrained to be triangular, local minimizers of \eqref{eq:triangularproblem} are known to yield optimal minimizers for the SDP \eqref{eq:relaxedSDRproblem}  (see \cite{boumal2015riemannian}, Corollary~3.6).  Nevertheless, we have found that it is better to enforce the triangular constraint in practice as it substantially simplifies the architecture detailed next. 


\section{VLSI Architecture}
\label{sec:impl}

We now propose a systolic VLSI architecture that implements TASER and enables high-throughput data detection at low hardware complexity.

\subsection{Architecture Overview}
\begin{figure}[tp]
\centering
\includegraphics[width=0.75\columnwidth]{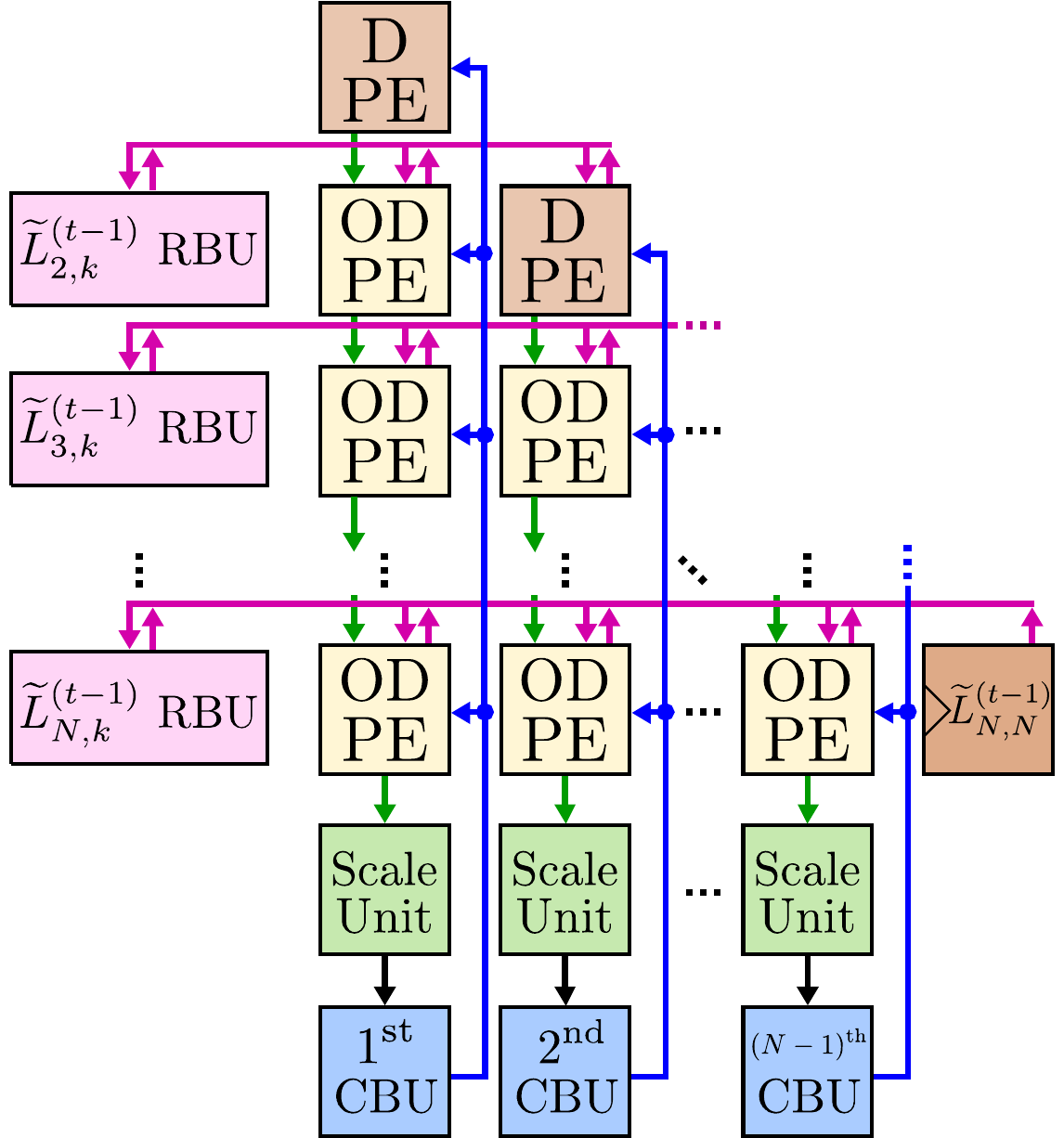}
\caption{High-level block diagram of TASER. We use a systolic array of processing elements (PEs) for the diagonal (D) and off-diagonal (OD) elements, which enables high throughput at low hardware complexity.}
\label{fig:array}
\end{figure}

\figref{fig:array} shows the proposed triangular systolic array consisting of ${N(N+1)}/{2}$  processing elements (PEs), which mainly perform multiply-accumulate (MAC) operations. Each PE is associated with an entry~$\widetilde{L}_{i,j}^{(t-1)}$ of the lower-triangular matrix~$\widetilde{\bL}^{(t-1)}$ and stores~$\widetilde{L}_{i,j}^{(t-1)}$ as well as the value $V_{i,j}^{(t)}$ of the $\bV^{(t)}$ matrix (cf.~\algref{alg:TASER}).
All PEs that are part of the same column receive data from a column-broadcast unit~(CBU); all  PEs that are part of the same row receive data from a row-broadcast unit (RBU). 

In the $k$th cycle during the $t$th TASER iteration, the $i$th RBU sends the value $\widetilde{L}_{i,k}^{(t-1)}$ to all PEs on row~$i$, while the $j$th CBU sends~$\widehat{T}_{k,j}$ to all PEs on column~$j$.
We assume that the (scaled) matrix~$\widehat{\bT}=2\tau\widetilde{\bT}$ has been computed in a pre-processing step and is stored in a memory (see \secref{sec:details} for more details on the memory implementation). The $\widetilde{L}_{i,k}^{(t-1)}$ value coming from each RBU is taken from the $(i,k)$ PE and sent to all other PEs in the same row.

With the data received from the CBU and the RBU, each PE performs MAC operations until the result $\widetilde{\bL}^{(t-1)} \widehat{\bT}$ on line 4 of \algref{alg:TASER} is computed. To include the subtraction on line~4, the operation $\widetilde{L}_{i,j}^{(t-1)} - \widetilde{L}_{i,1}^{(t-1)} \widehat{T}_{1,j}$ is performed in the first cycle of each TASER iteration and stored in the accumulator. During subsequent cycles, the products $\widetilde{L}_{i,k}^{(t-1)} \widehat{T}_{k,j}$, with $2 \leq k \leq N$, are sequentially subtracted from the accumulator.
Since the matrix $\widetilde{\bL}$ is lower-triangular, we have $\widetilde{L}_{i,k'} = 0$ if $i < k'$. Hence, we avoid the subtraction of $\widetilde{L}_{i,k'}^{(t-1)} \widehat{T}_{k',j}$ as they are zero. This implies that the $V^{(t)}_{i,j}$ values of the PEs in the $i$th row of the systolic array are computed after only~$i$ clock cycles, so the matrix $\bV^{(t)}$ on line~4 is completed after~$N$ cycles.

An example for an  $N=3$ array is shown in \figref{fig:archexpl_cyc1}. In the first cycle of the $t$th iteration, the $(1,1)$ PE has access to the values $\widetilde{L}_{1,1}^{(t-1)}$ and $\widehat{T}_{1,1}$, so it can compute $V_{1,1}^{(t)} = \widetilde{L}_{1,1}^{(t-1)} - \widetilde{L}_{1,1}^{(t-1)} \widehat{T}_{1,1}$. In the same cycle, the PEs on the second row perform their first MAC operation, which leaves $\widetilde{L}_{2,j}^{(t-1)} - \widetilde{L}_{2,1}^{(t-1)} \widehat{T}_{1,j}$ in their accumulators. 

In the second cycle, the PEs on the second row receive $\widetilde{L}_{2,2}^{(t-1)}$ via the RBU and $\widehat{T}_{2,j}$ via the CBUs (see \figref{fig:archexpl_cyc2}), so they can finish computing $V_{2,j}^{(t)} = \widetilde{L}_{2,j}^{(t-1)} - \widetilde{L}_{2,1}^{(t-1)} \widehat{T}_{1,j} - \widetilde{L}_{2,2}^{(t-1)} \widehat{T}_{2,j}$. 
In addition, in this same cycle, the (1,1) PE can use its MAC unit to square its $V_{1,1}^{(t)}$ value. This result will be available and sent to the next PE in the same column on the following cycle, which is represented with the green arrow in \figref{fig:archexpl_cyc3}.

In the third cycle, the $(2,1)$ PE  has access to ${V_{1,1}^{(t)}}^2$ (from the $(1,1)$ PE) and $V_{2,1}^{(t)}$ (stored internally), so it can use its MAC unit to square $V_{2,1}^{(t)}$ and add it to ${V_{1,1}^{(t)}}^2$ (see \figref{fig:archexpl_cyc3}). 
The result will be the sum of the squares of the first two elements of the first column of $\bV^{(t)}$ and, in the next cycle, the result will be available and sent to the next PE in the same column (for this example, the $(3,1)$ PE), so it can repeat the same procedure. This process is replicated in all the columns and repeated until all the PEs of the array have completed their calculations. By doing so, the squared $\ell_2$-norm of each column of $\bV^{(t)}$ is computed after~$N+1$ clock cycles, just one cycle after $\bV^{(t)}$ is completed.
In the $(N+2)$th cycle, the squared $\ell_2$-norm for the $j$th column is passed to a scale unit, which computes its inverse square root and multiplies the result with $D_{j,j}$. This operation takes two clock cycles to complete, so its result is ready in the $(N+4)$th cycle.

\sloppy

\rev{In the $(N+4)$th cycle, the scaling factor $D_{j,j}/\|\bmv_j\|_2$ (where~$\bmv_j$ is the $j$th column of $\bV^{(t)}$) is sent to all the PEs in the same column via the CBU, as shown in \figref{fig:archexpl_cyc4}.} 
Then, in the $(N+5)$th and final cycle of the iteration, all PEs multiply the received scaling factor to their associated~$V_{i,j}^{(t)}$ value to obtain the next iterate $\widetilde{L}_{i,j}^{(t)}$, thus completing the proximal step on line~5 of \algref{alg:TASER}.

\fussy

Prior to decoding the next symbol, line~2 of \algref{alg:TASER} must be executed; this is accomplished using the CBUs, which send the $D_{j,j}$ values to the diagonal PEs, while the off-diagonal PEs clear their $\widetilde{L}_{i,j}^{(t-1)}$ registers.

\begin{figure}[tp]
\begin{tabular}{cc}
\subfigure[First cycle]{\includegraphics[width=0.22 \columnwidth]{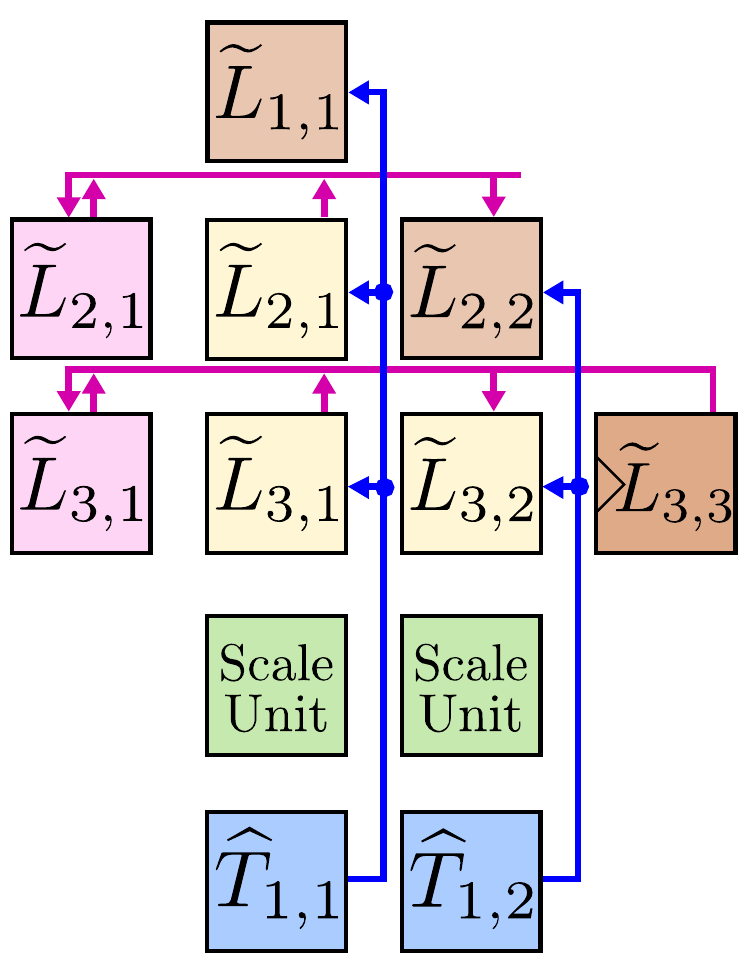} \label{fig:archexpl_cyc1}} 
\hspace{0.0cm}
\subfigure[Second cycle]{\includegraphics[width=0.22 \columnwidth]{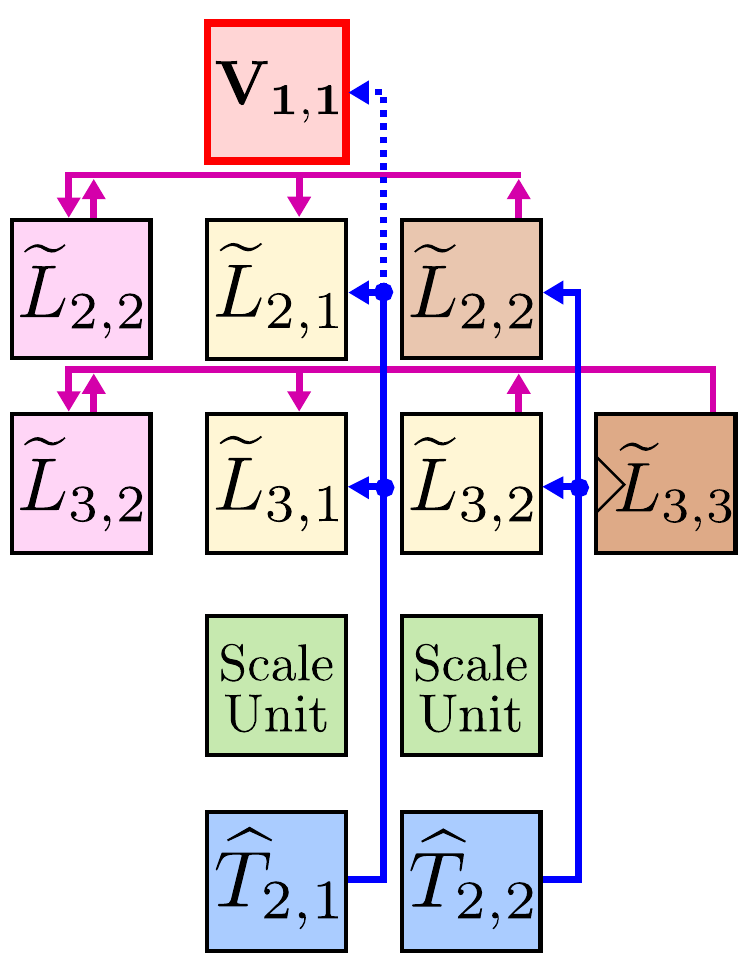}\label{fig:archexpl_cyc2}}
\hspace{0.0cm}
\subfigure[Third cycle]{\includegraphics[width=0.22 \columnwidth]{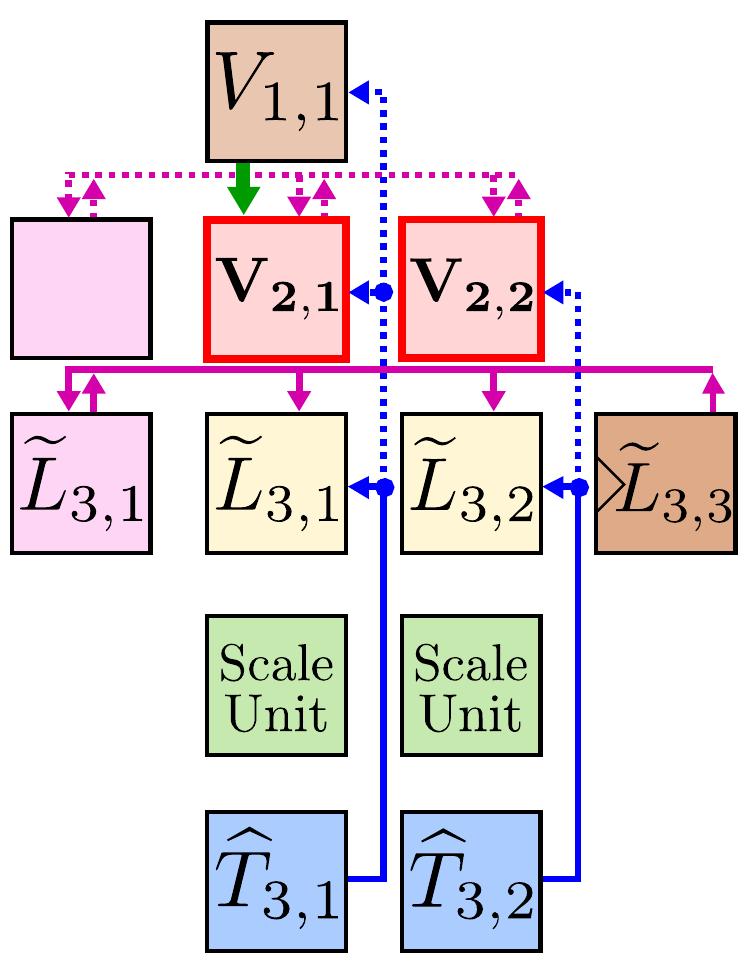}\label{fig:archexpl_cyc3}}		
\hspace{0.0cm}
\subfigure[Seventh cycle]{\includegraphics[width=0.22 \columnwidth]{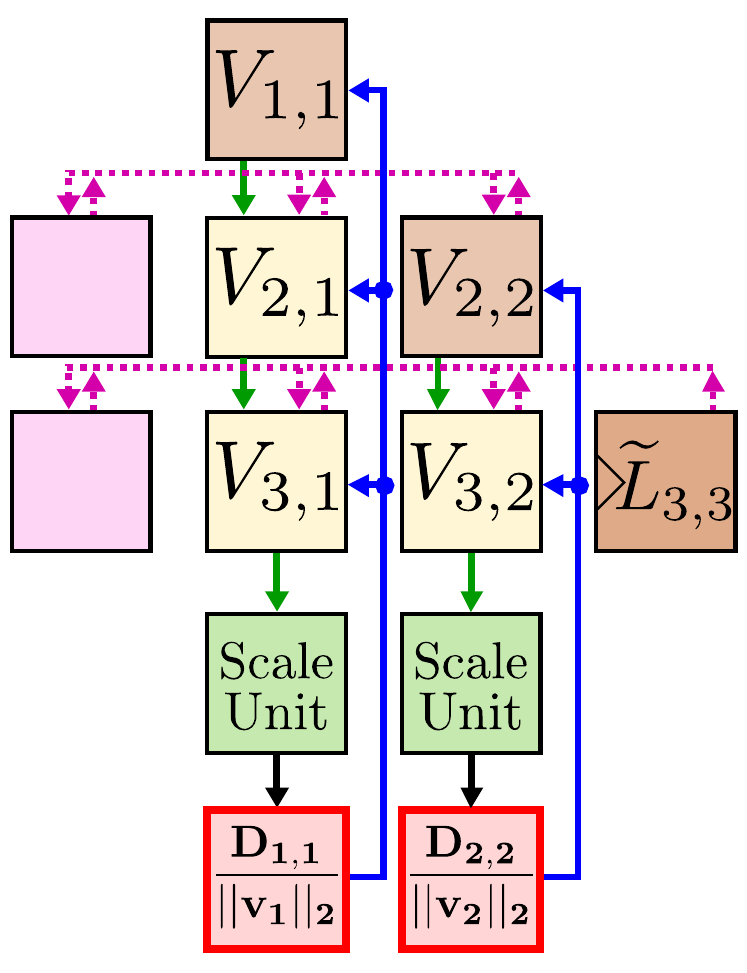}\label{fig:archexpl_cyc4}}		
\end{tabular}
\caption{Different cycles for the $t$th iteration on an $N=3$ TASER array. The symbols inside the PEs correspond to the quantities of interest, for each cycle, stored in each PE, while the symbols inside the RBUs and CBUs correspond to the quantities being transmitted by these units. All the $\widetilde{L}$ values correspond to the $(t-1)$th iteration, while the $V$ values are from the $t$th iteration.}
\end{figure}

\begin{figure}[tp]
\centering
\includegraphics[width=0.95\columnwidth]{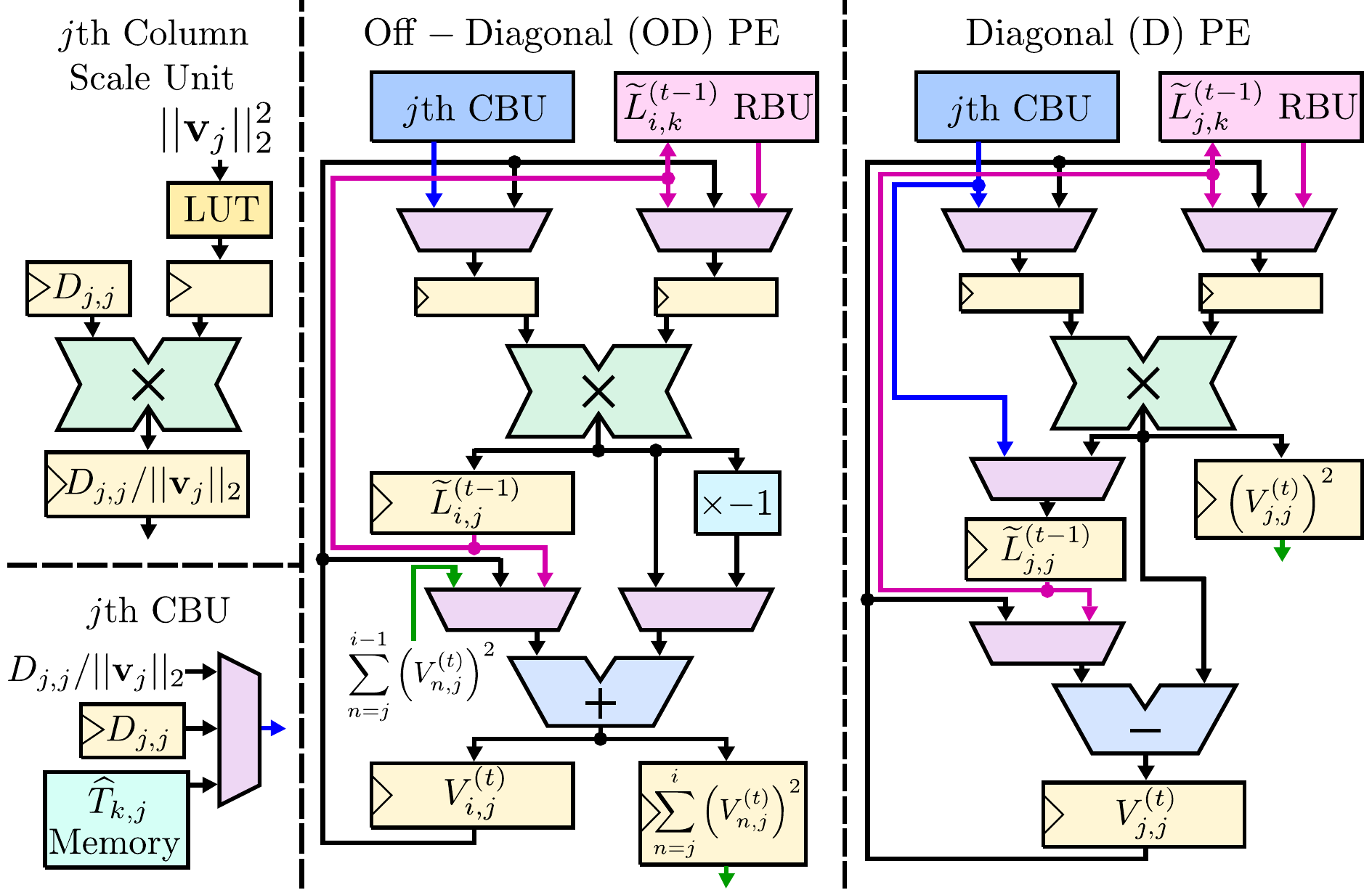}
\caption{Architecture details of the column-scale unit (CSU), the column-broadcast unit (CBU),  and the off-diagonal (OD) and diagonal (D) processing elements (PEs).}
\label{fig:pes}
\end{figure}

\subsection{Processing Element}

We use two slightly distinct types of PEs in our systolic array: (i) off-diagonal~(OD) PEs and (ii) diagonal~(D) PEs (see Figure \ref{fig:pes}). Both PE types support the following four operation modes:

\subsubsection{Initialization of $\widetilde{\bL}$}
This mode is used for line 2 of \algref{alg:TASER}. All off-diagonal PEs initialize $\widetilde{L}_{i,j}^{(t-1)}=0$; the diagonal PEs initialize their states with~$D_{j,j}$ received from the~CBU.

\subsubsection{Matrix multiplication}
This mode is used to compute line 4 of \algref{alg:TASER}.
The multiplier uses the inputs from both broadcast signals. In the first cycle of the matrix-matrix multiplication procedure, the multiplier's output is subtracted from $\widetilde{L}_{i,j}^{(t-1)}$; in all other cycles, it is subtracted from the accumulator. Since each PE stores its own $\widetilde{L}_{i,j}^{(t-1)}$ value, in the $k$th cycle, all the PEs in the $k$th column use their internal $\widetilde{L}_{i,k}^{(t-1)}$ value to feed the multiplier, instead of the signals coming from the RBU. 

\subsubsection{Squared $\ell_2$-norm calculation}
This mode is used for line 5 of \algref{alg:TASER}. Both of the multiplier's inputs are $V_{i,j}^{(t)}$. For the D-PEs, the result is passed to the next PE in the same column. For the OD-PEs, the output of the multiplier is added to the $\sum_{n=j}^{i-1} \left(V_{n,j}^{(t)}\right)^2$ value from the preceding PE in the same column; the result $\sum_{n=j}^{i} \left(V_{n,j}^{(t)}\right)^2$ is sent to the next PE or to the scale unit, if the PE is in the last row. 

\subsubsection{Scaling}
This mode \rev{completes} line 5 of \algref{alg:TASER}.
One of the multiplier's inputs is $V_{i,j}^{(t)}$ and the other is the value $D_{j,j}/\|\bmv_j\|_2$, which was computed previously by the scale unit and received through the CBU. The result is $\widetilde{L}_{i,j}^{(t)}$ and is stored in every PE as the $\widetilde{L}_{i,j}^{(t-1)}$ of the next iteration.

\subsection{Implementation Details}
\label{sec:details}
To demonstrate the efficacy of TASER and the proposed triangular systolic array, we implemented FPGA and ASIC reference designs for various array sizes $N$. \rev{All designs were developed and optimized in Verilog on register-transfer level (RTL).} The implementation details are as follows:

\subsubsection{Fixed-point design parameters}
To minimize the hardware complexity while maintaining near-optimal error-rate performance, all our designs use $14$\,bit fixed-point numbers. All PEs, except for the ones in the bottom row of the triangular array, use $8$ fraction bits to represent $\widetilde{L}_{i,j}^{(t-1)}$ and $V_{i,j}^{(t)}$; the PEs in the bottom row use $7$ fraction bits. For the element $\widetilde{L}_{N,N}$, we do not use a PE and store the value (which remains constant) in a register with~$5$ fraction bits.

\begin{figure*}[tp]
\centering
\subfigure[BPSK]{\includegraphics[height=5cm]{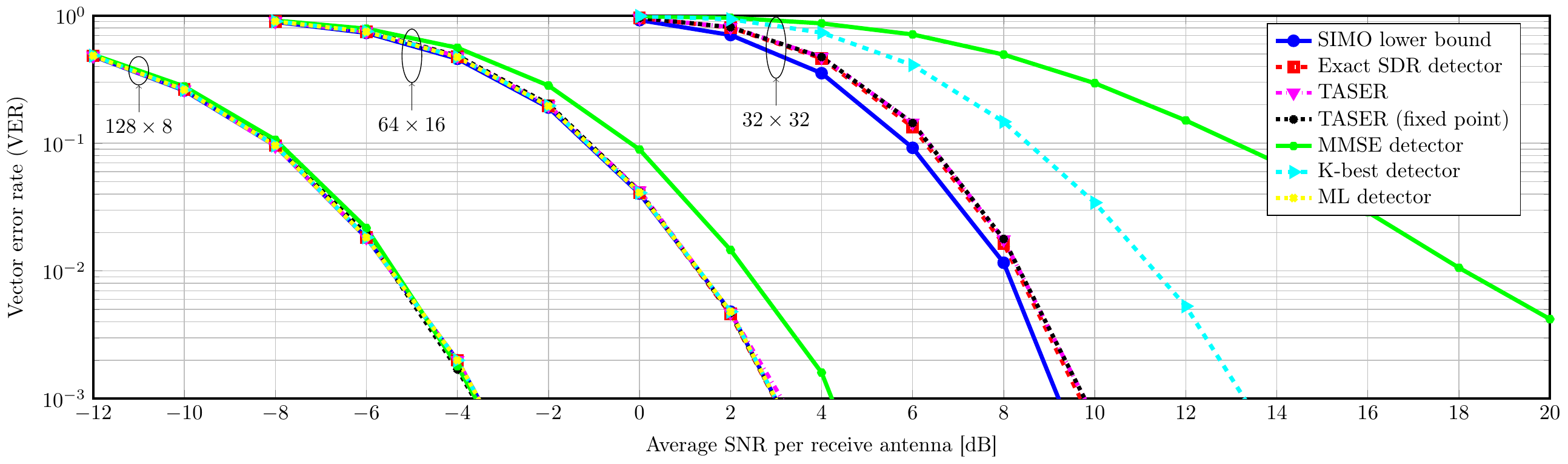}\label{fig:16x16bpsk_ver}}
\subfigure[QPSK]{\includegraphics[height=5cm]{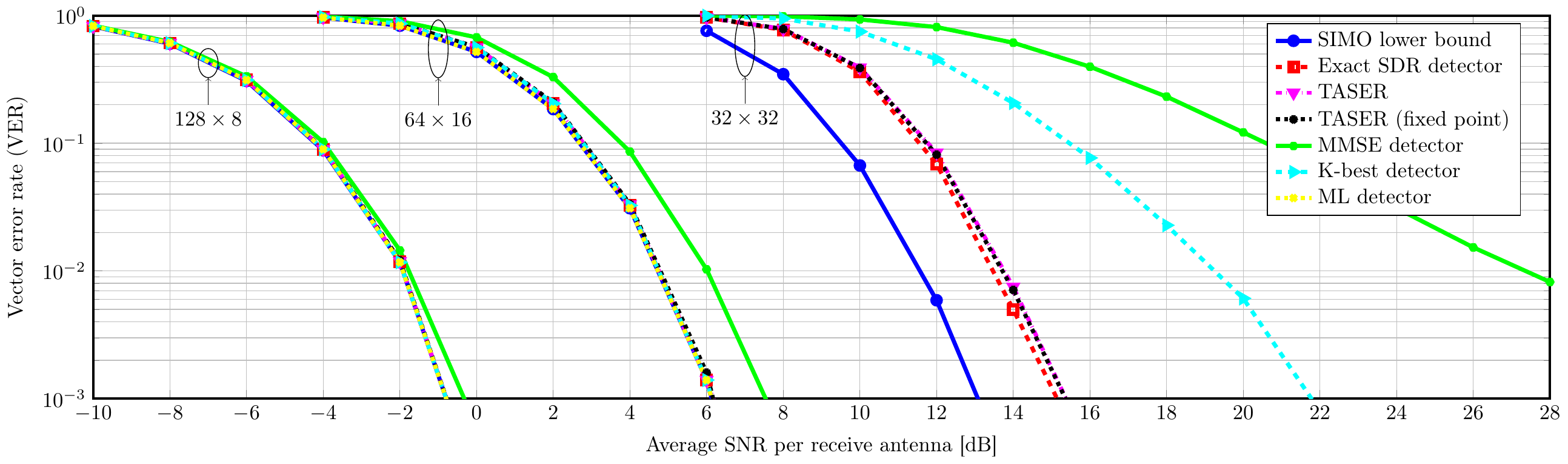}\label{fig:16x16qpsk_ver}}
\caption{\rev{Vector error rate (VER) for three different $B \times U$ large-MIMO system configurations. Even for square massive MU-MIMO systems (see the $32\times32$ system), TASER achieves near-optimal VER performance (close-to-ML and the SIMO lower bound) and approaches the  performance of the exact SDR detector. For systems with more BS antennas than users (see the $128\times8$ system), all of the considered data detectors approach optimal performance.}}
\label{fig:16x16_ver}
\end{figure*}

\begin{figure}[tp]
\centering
\subfigure[BPSK]{\includegraphics[height=5cm]{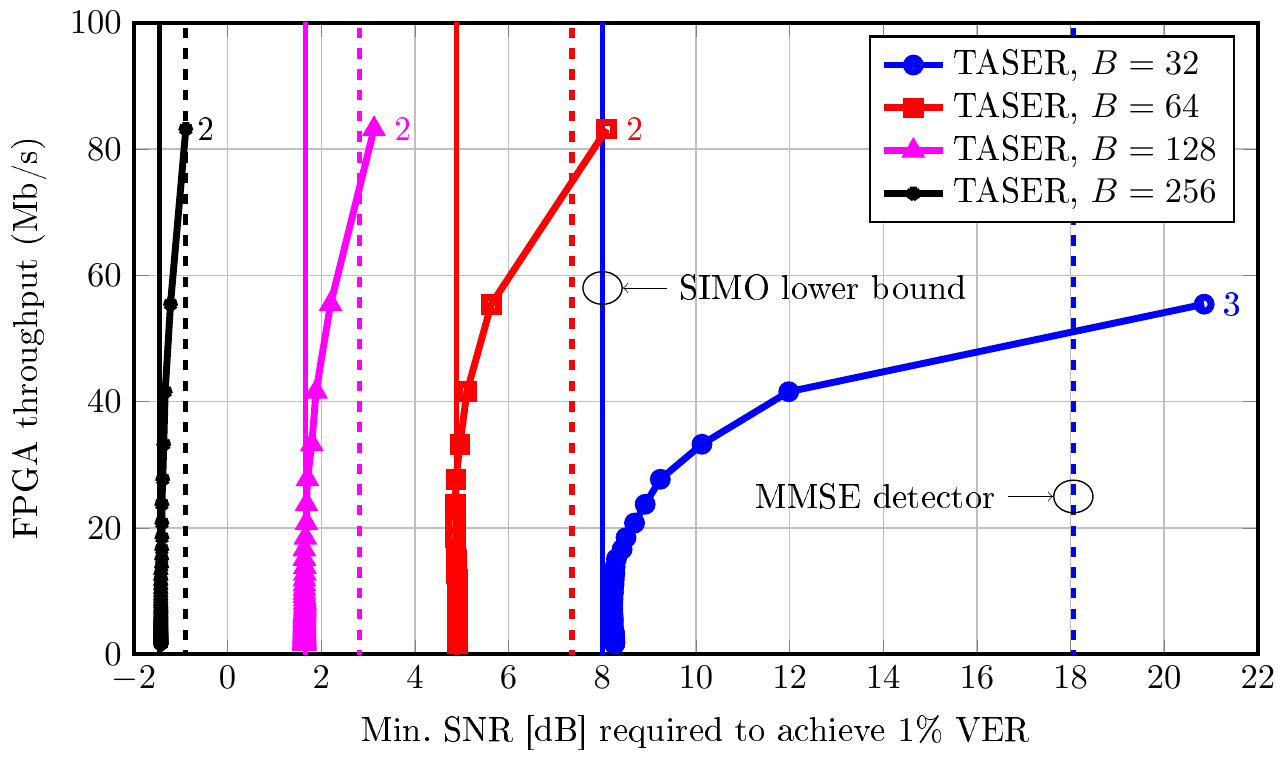}\label{fig:16x16bpsk_tradeoff}}
\subfigure[QPSK]{\includegraphics[height=5cm]{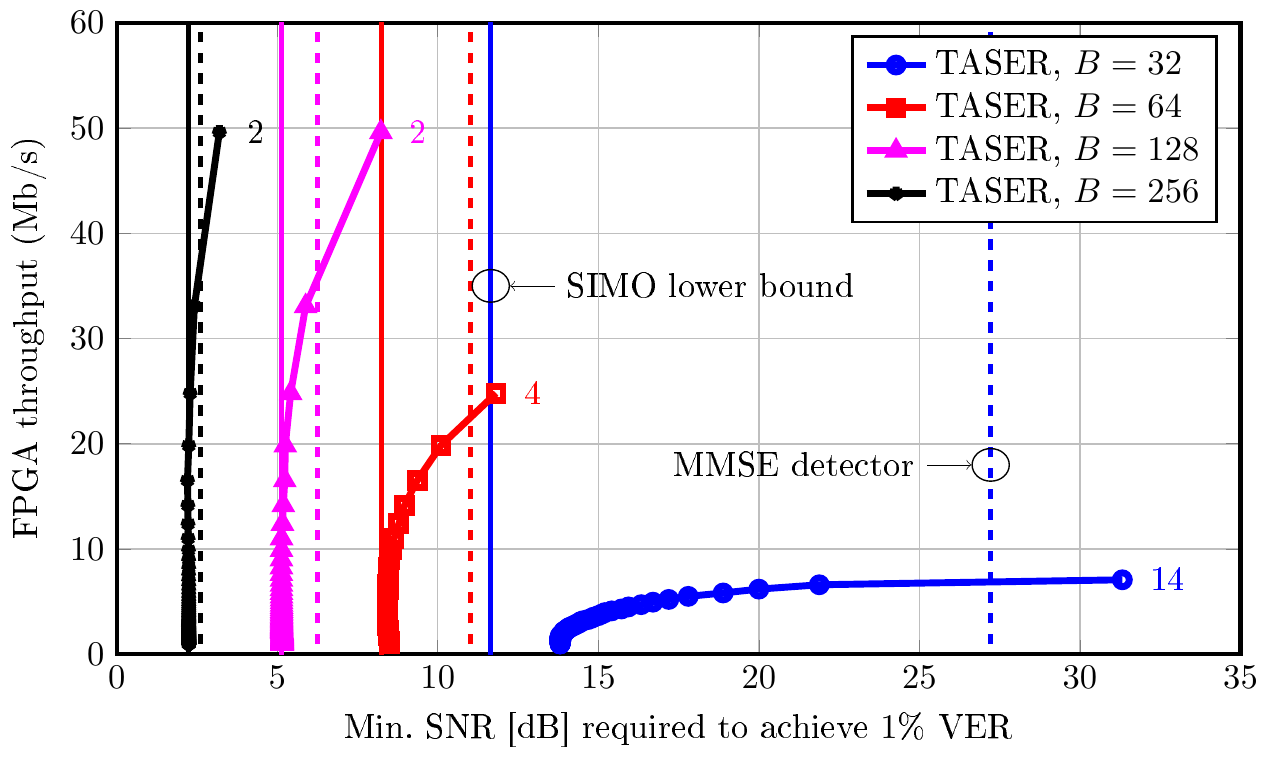}\label{fig:16x16qpsk_tradeoff}} \\[-0.075cm]
\caption{Throughput for the \rev{FPGA} design vs.\ performance trade-off for a \rev{$32$-user} system. Vertical \rev{solid} lines represent the SIMO lower bound; dashed lines represent linear MMSE performance. TASER outperforms linear detectors in almost all operation regimes. The numbers next to the markers correspond to the number of TASER iterations $t_\text{max}$.}
\vspace{-0.15cm}
\end{figure}

\begin{figure}[tp]
\centering
\subfigure[BPSK]{\includegraphics[height=5cm]{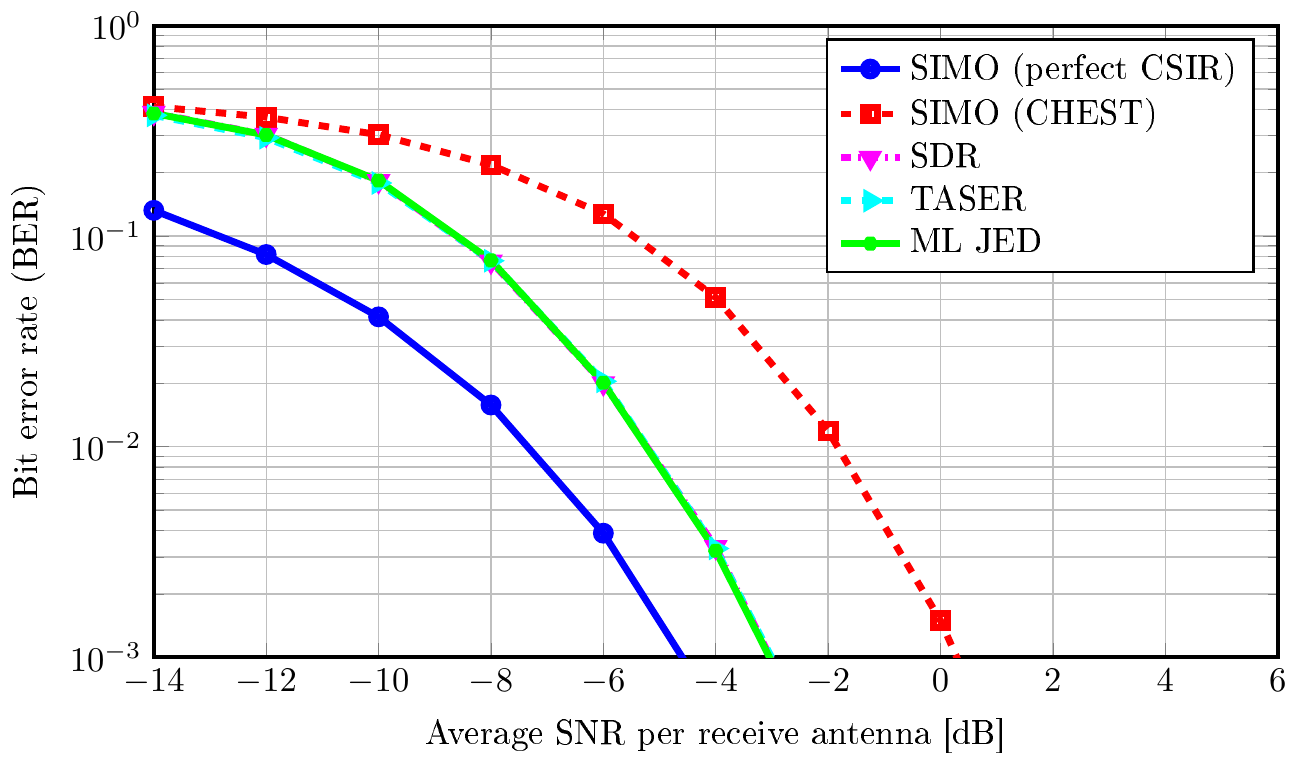}\label{fig:16_jed_bpsk_ver}}
\subfigure[QPSK]{\includegraphics[height=5cm]{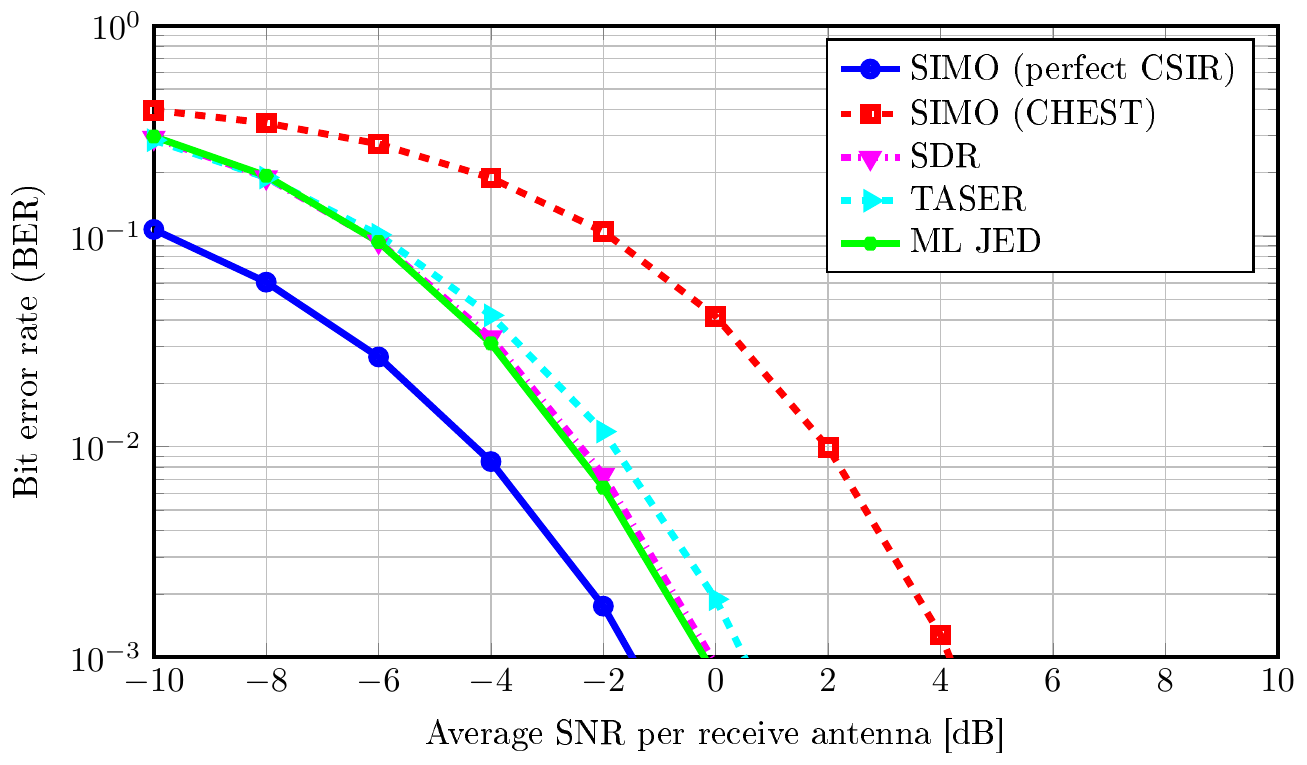}\label{fig:16_jed_qpsk_ver}}
\caption{\rev{Bit error rate (BER)} for a SIMO system with $16$ BS antennas with transmission over $16$ time slots. TASER-based  JED achieves near-optimal BER performance (close-to perfect CSIR) and achieves performance similar to the exact SDR detector and ML JED; channel estimation (CHEST) followed by SIMO detection entails a 3\,dB SNR loss.}
\end{figure}

\subsubsection{Inverse square-root computation}
The inverse square-root operation in the scale unit is implemented using a look-up table (LUT), which we synthesized using random logic. Each LUT consists of $2^{11}$ entries with $14$\,bits per word, of which $13$ are fraction bits.

\subsubsection{\rev{$\widehat{\bT}$}-matrix memories}
For the FPGA designs, the $\widehat{T}_{k,j}$ memories are implemented with LUTs used as distributed RAM (i.e., no block RAMs were used); for the ASIC designs, we use latch arrays built from standard cells~\cite{meinerzhagen2010towards} in order to minimize the circuit area.

\subsubsection{RBU and CBU design}
\rev{The RBUs are implemented differently for the FPGA and ASIC designs. For the FPGA designs, the RBU of the $i$th row is an $i$-input multiplexer that receives data from all the PEs on its row, and also sends the appropriate $\widetilde{L}_{i,k}^{(t-1)}$ to these PEs. For the ASIC designs, the RBU consists of a bidirectional bus, where each PE on its row uses a tri-state buffer to send data through it one at a time, while all the PEs on the same row acquire data from it. A similar approach is used for the CBUs:} We use multiplexers for the FPGA designs and busses for the ASIC designs.
For both target architectures, the output of the $i$th RBU connects to $i$ PEs. This path suffers from large fan-out for large values of $i$, eventually becoming the critical path for large systolic arrays. The same behavior applies to the CBUs.
In order to shorten these critical paths in our architecture, we place stage registers at the inputs and outputs of the respective broadcast units. While this approach entails two penalty cycles per TASER iteration, the overall detection throughput is increased as we achieve a substantially higher clock frequency.

\section{Implementation Results and Comparison}
\label{sec:results}

We now provide error-rate performance results for coherent data detection in massive MU-MIMO systems and for JED in massive SIMO systems. We then show reference FPGA and ASIC implementation results which we compare to existing designs for massive MU-MIMO systems.

\subsection{Error-Rate Performance}
\label{sec:simresults}

\subsubsection{Coherent massive MU-MIMO data detection}
\rev{Figures~\ref{fig:16x16bpsk_ver} and \ref{fig:16x16qpsk_ver} show vector error rate (VER) simulation results for TASER  with BPSK and QPSK modulation, respectively.\footnote{\rev{The vector error rate (VER) corresponds to $\Pr{\bms\neq\hat{\bms}}$, which is the probability of detecting a different vector $\hat{\bms}$ than the transmitted one $\bms$.}} We show simulation results for coherent data detection  with i.i.d.\ flat Rayleigh fading in tall $128\times8$ and $64\times16$ systems, as well as a square $32\times32$ large MU-MIMO system (we use the notation $\MR\times \MT$).
We show the performance of ML detection (only for the $U=8$ and $U=16$ systems; computed using the sphere-decoding algorithm in \cite{burg2005vlsi}), exact SDR detection from~\eqref{eq:SDRproblem}, linear MMSE detection, and the $K$-best algorithm as detailed in~\cite{wenk2006k} with $K=5$. 
As a baseline, we also include the performance of the SIMO lower bound.}

\rev{For the $128\times8$ massive MIMO system, we see that all detectors approach optimal performance (even the SIMO lower bound); this is a well-known result from the large MIMO literature \cite{Marzetta2010,Rusek2012,hoydis2013massive,larsson2014massive}.
For the $64\times16$ massive MIMO system, only the linear MMSE detector suffers from a (rather small) performance loss; all the other detectors perform equally well. 
For the more challenging square $32\times32$ massive MIMO system, we see that TASER achieves near-ML performance and outperforms linear MMSE detection and the $K$-best algorithm (note that, even with the sphere decoder, ML detection exhibits prohibitive complexity). 
We also show the fixed-point performance of our TASER hardware design, denoted by ``fixed point'' in Figures~\ref{fig:16x16bpsk_ver} and \ref{fig:16x16qpsk_ver}, which demonstrates a small implementation loss (less than 0.2\,dB SNR at 1\% VER).}

Figures \ref{fig:16x16bpsk_tradeoff} and \ref{fig:16x16qpsk_tradeoff} show the trade-off between the throughput of TASER for the \rev{FPGA} design \rev{(see \secref{sec:FPGAdesign} for the details)} and the minimum SNR required to achieve $1$\% VER for the coherent data detection in large-MIMO systems. 
We also include the SIMO lower bound and the performance of linear MMSE detection as a reference. \rev{The MMSE detector} serves as a fundamental performance limit of the conjugate gradient least-squares (CGLS) detector~\cite{Yin2015}, the Neumann-series detector~\cite{Wu2014},  the \rev{optimized coordinate-descent (OCD)} detector~\cite{Wu2016}, and the Gauss-Seidel (GS) detector~\cite{WZXXY2016}.
The maximum number of TASER iterations~$t_\text{max}$ enables us to tune the performance/complexity trade-off; only a few iterations are sufficient to outperform linear detection. 
\rev{We also see that TASER delivers near-ML performance and  achieves throughputs from $10$\,Mb/s to $80$\,Mb/s for the FPGA design.}

\subsubsection{JED in massive SIMO systems}
Figures \ref{fig:16_jed_bpsk_ver} and \ref{fig:16_jed_qpsk_ver} show \rev{BER} simulation results for TASER with BPSK and QPSK modulation, respectively. The simulations are for a $B=16$ BS antennas and $16$ time slots SIMO system; we perform $t_\text{max}=20$ iterations and use an i.i.d.\ flat Rayleigh block-fading channel model.
We include the performance of the SIMO detection with both perfect receiver channel state information (CSIR) and channel estimation (CHEST), exact SDR detection \rev{from} \eqref{eq:SDRproblem}, and ML JED detection (which is computed using the algorithm proposed in~\cite{xu2008exact}). 
We see that TASER achieves near-optimal performance, as it is close to a system with perfect CSIR, and outperforms detection via SIMO CHEST, while achieving similar performance as ML JED and exact SDR detection at manageable complexity.
We note that the trade-offs between throughput and SNR performance behave analogously to the massive MU-MIMO case. 

\subsection{\rev{Computational Complexity}}
\rev{We now compare the computational complexity of TASER with other large-scale MIMO data-detection algorithms proposed in the literature, namely the CGLS detector~\cite{Yin2015}, the Neumann-series detector~\cite{Wu2014}, the OCD detector~\cite{Wu2016}, and the GS detector~\cite{WZXXY2016}. 
\tblref{tbl:complexity} shows the number of real-valued multiplications for $t_\text{max}$ iterations. We see that the complexity of TASER (for BPSK and QPSK) and the Neumann-series detector scales with $t_\text{max}U^3$, whereas TASER is slightly more complex; CGLS and GS both scale with $t_\text{max}U^2$, whereas GS is slightly more complex; OCD scales with $t_\text{max}BU$. Evidently, the near-ML performance of TASER comes at the cost of high computational complexity. In contrast, CGLS, OCD, and GS are rather inexpensive, but also perform poorly in square systems (see the $32\times32$ results in~\figref{fig:16x16_ver}). We finally note that TASER can be used for JED---the other (approximate) linear detectors cannot be used for this application.}
\begin{table}[tp]
\centering
{\color{black}\caption{Computational complexity of different data detection algorithms~for massive MIMO systems}
\label{tbl:complexity}
\begin{minipage}[c]{1\columnwidth}
\centering
\begin{tabular}{ll}
\toprule
{Algorithm} & Computational complexity\footnote{\rev{The complexity is measured by the number of real-valued multiplications for $t_\text{max}$ iterations. Complex-valued multiplications are assumed to require four real-valued multiplications. All  results ignore the preprocessing complexity.}} \tabularnewline
\midrule
BPSK TASER &  $t_\text{max}(\frac{1}{3}U^3+\frac{5}{2}U^2+\frac{37}{6}U+4)$   \tabularnewline
QPSK TASER & $t_\text{max}(\frac{8}{3}U^3+10U^2+\frac{37}{3}U+4)$ \tabularnewline
CGLS~\cite{Yin2015} & $(t_\text{max}+1)(4U^2+20U)$  \tabularnewline
Neumann~\cite{Wu2014} & $(t_\text{max}-1)2U^3+2U^2-2U$  \tabularnewline
OCD~\cite{Wu2016} & $t_\text{max}(8BU + 4U)$ \tabularnewline
GS~\cite{WZXXY2016} &  $t_\text{max}6U^2$  \tabularnewline  
\bottomrule
\end{tabular}
\end{minipage}}
\end{table} 
\subsection{FPGA Implementation Results}
\label{sec:FPGAdesign}
To demonstrate the effectiveness of TASER, we developed several FPGA designs for systolic array sizes of  $N=9$, $N=17$, $N=33$, and $N=65$. The FPGA designs were  implemented using Xilinx Vivado Design Suite and optimized for a Xilinx Virtex-7 XC7VX690T FPGA. The associated implementation results are shown in \tblref{tbl:implresultstaser}. 
As expected, the resource utilization increases quadratically with the array size $N$. 
For the $N=9$ and $N=17$ arrays, the critical path is in the PEs' MAC unit; for the $N=33$ and $N=65$ arrays, the critical path is in the row broadcast multiplexers, which limits the throughput of the $N=65$ array. 
\begin{table}[tp]
 \begin{minipage}[c]{1\columnwidth}
    \centering
    \caption{Implementation results on a Xilinx Virtex-7 XC7VX690T~FPGA~for different TASER array sizes}
       \label{tbl:implresultstaser}
      \begin{tabular}{lllll}
  \toprule
  {Array size} & ${N=9}$ & ${N=17}$ & ${N=33}$ & ${N=65}$ \tabularnewline
  {BPSK users / time slots} & ${8}$ & ${16}$ & ${32}$ & ${64}$ \tabularnewline
  {QPSK users / time slots} & ${4}$ & ${8}$ & ${16}$ & ${32}$ \tabularnewline
  \midrule
  {Slices} & 1\,467 & 4\,350 & 13\,787 & 60\,737 \tabularnewline
  {LUTs} & 4\,790 & 13\,779 & 43\,331 & 149\,942 \tabularnewline
  {FFs} & 2\,108 & 6\,857 & 24\,429 & 91\,829 \tabularnewline
  {DSP48s} & 52 & 168 & 592 & 2\,208 \tabularnewline
  {Max.\ clock freq.\ [MHz]} & 232 & 225 & 208 & 111 \tabularnewline
  {Min.\ latency [clock cycles]} & 16 & 24 & 40 & 72 \tabularnewline
  {Max.\ throughput [Mb/s]} & 116 &  150 &  166 & 98 \tabularnewline
  {Power estimate\footnote{Statistical power estimation at max.\ clock freq. and 1.0\,V supply voltage.} [W]} & 0.6 &  1.3 &  3.6 & 7.3 \tabularnewline
  \bottomrule
  \end{tabular}
  \end{minipage}
\end{table}
\rev{In \tblref{tbl:implcomp}, we compare TASER to the few existing large MIMO data detector designs, namely CGLS detector~\cite{Yin2015}, the Neumann-series detector~\cite{Wu2014}, the OCD detector~\cite{Wu2016}, and the GS detector~\cite{WZXXY2016}. All of these detectors have been implemented on the same FPGA and for a $128\times8$ large-MIMO system.}
    \begin{table*}
      \begin{minipage}[c]{1\textwidth}
		  \centering
    \caption{Comparison of $128\times8$ large-MIMO detectors on a Xilinx Virtex-7 XC7VX690T FPGA }
       \label{tbl:implcomp}
    \vspace{-1mm}
 \scalebox{1.0}{
  \begin{tabular}{lllllll}
  \toprule
  {Detection algorithm} & TASER & TASER & CGLS~\cite{Yin2015} & Neumann~\cite{Wu2014} & OCD~\cite{Wu2016} & GS~\cite{WZXXY2016} \tabularnewline
  {Error-rate performance}    & Near-ML & Near-ML & Near-MMSE & Near-MMSE & Near-MMSE & Near-MMSE\tabularnewline 
  {Modulation scheme} & BPSK & QPSK & 64-QAM & 64-QAM & 64-QAM & 64-QAM \tabularnewline
  {Preprocessing} & Not included & Not included  & Included & Included & Included & Included \tabularnewline
  {Iterations \rev{$t_\text{max}$}} & 3 & 3 & 3 & 3 & 3 & 1\tabularnewline
  \midrule
  {Slices} & 1\,467 (1.35\,\%) & 4\,350 (4.02\,\%) & 1\,094 (1\,\%) & 48\,244 (44.6\,\%) & 13\,447 (12.4\,\%) & n.a. \tabularnewline
  {LUTs} & 4\,790 (1.11\,\%) & 13\,779 (3.18\,\%) & 3\,324 (0.76\,\%) & 148\,797 (34.3\,\%) & 23\,955 (5.53\,\%) & 18\,976 (4.3\,\%) \tabularnewline
  {FFs} & 2\,108 (0.24\,\%) & 6\,857 (0.79\,\%) & 3\,878 (0.44\,\%) &   161\,934 (18.7\,\%) & 61\,335 (7.08\,\%) & 15\,864 (1.8\,\%)  \tabularnewline
  {DSP48s} & 52 (1.44\,\%) & 168 (4.67\,\%) & 33 (0.9\,\%)& 1\,016  (28.3\,\%) & 771 (21.5\,\%) & 232 (6.3\,\%)\tabularnewline
  {BRAM18s} & 0 (0\,\%) & 0 (0\,\%) & 1 (0.03\,\%) & 32\footnote{\label{footnote:BRAM}These designs use BRAM36s, which are equal to two BRAM18s.} (1.08\,\%) & 1 (0.03\,\%) & 12\textsuperscript{\ref{footnote:BRAM}} (0.41\,\%) \tabularnewline  
    \midrule
  {Clock frequency [MHz]} & 232 & 225 & 412 & 317 & 263 & 309 \tabularnewline
  {Latency [clock cycles]} & 48 & 72 & 951 & 196 & 795 & n.a.  \tabularnewline
  {Throughput [Mb/s]} & 38 &  50 & 20 &  621  & 379 & 48 \tabularnewline
  \midrule
  {Throughput/LUTs}  & {7\,933} & {3\,629} & {6\,017} & {4\,173} &  {15\,821}  & 2\,530 \tabularnewline
  \bottomrule
  \end{tabular}
  } 
  \end{minipage}
  \end{table*}
TASER achieves comparable throughput to the CGLS and GS designs and significantly lower latency than the Neumann-series and CD detectors. In terms of the hardware-efficiency (measured in terms of throughput per FPGA LUTs), our design performs similarly to CGLS, Neumann and GS, and inferior to the CD design. 
\rev{For the $128\times8$ massive MIMO system, all detectors achieve near-ML performance.  However, when considering the $32\times32$ large MIMO system (see Figures \ref{fig:16x16bpsk_ver} and \ref{fig:16x16qpsk_ver}), TASER  significantly outperforms the error-rate performance of all these reference designs.}
\rev{We conclude by noting that the CGLS, Neumann, OCD, and GS detectors are able to support 64-QAM, whereas TASER is limited to either BPSK or QPSK. This limitation negatively affects the throughput and hardware-efficiency of TASER, as the throughput of the other (approximate)  methods scales linearly in the number of bits per symbol---the provided throughput and hardware-efficiency results favor the CGLS, Neumann, OCD, and GS detectors.}
\subsection{ASIC Implementation Results}
\label{sec:ASICdesign}

 \begin{table}[tp]
 \vspace{-0.1cm}
    \caption{ASIC implementation results for different TASER array sizes}
       \label{tbl:implresultstaserASIC}
      \begin{minipage}[c]{1\columnwidth}
          \centering
  \begin{tabular}{llll}
  \toprule
  {Array size} & ${N=9}$ & ${N=17}$ & ${N=33}$ \tabularnewline
  {BPSK users / time slots} & ${8}$ & ${16}$ & ${32}$ \tabularnewline
  {QPSK users / time slots} & ${4}$ & ${8}$ & ${16}$ \tabularnewline
  \midrule
  {Core area [$\mu\mathrm{m}^2$]} & 149\,738 & 482\,677 & 1\,382\,318 \tabularnewline
  {Core density [$\%$]} & 69.86 & 68.89 & 72.89 \tabularnewline
  {Cell area [GE\footnote{One gate equivalent (GE) refers to the area of a unit-sized NAND2 gate.}]} & 148\,264 & 471\,238 & 1\,427\,962 \tabularnewline    
  {Max.\ clock freq.\ [MHz]} & 598 & 560 & 454 \tabularnewline
  {Min.\ latency [clock cycles]} & 16 & 24 & 40 \tabularnewline
  {Max.\ throughput [Mb/s]} & 298 &  374 &  363 \tabularnewline
  {Power estimate\footnote{Post-place-and-route power estimation at max.\ clock \rev{freq.} and $1.1$\,V.} [mW]} & 41 & 87 & 216 \tabularnewline
  \bottomrule
  \end{tabular}
  \end{minipage}
  \end{table} 
  
    \begin{table*}[tp]
	\vspace{-0.1cm}
    \centering
    \caption{Area breakdown for different TASER ASIC array sizes in gate equivalents (GEs)}
     \label{tbl:areataserASIC}
  \begin{tabular}{lllllll}
  \toprule
  {Array size} & \multicolumn{2}{c}{$N=9$} & \multicolumn{2}{c}{$N=17$} & \multicolumn{2}{c}{$N=33$}\tabularnewline
  \midrule
  {Element} & {Unit area} & {Total area} & {Unit area} & {Total area} & {Unit area} & {Total area} \tabularnewline
  \midrule
  {PEs } & 2\,391 (1.6\,\%) & 105\,198 (70.9\,\%) & 2\,404 (0.5\,\%) & 365\,352 (77.5\,\%) & 2\,086 (0.1\,\%) & 1\,168\,254 (81.8\,\%) \tabularnewline  
  {Scale units} & 6\,485 (4.4\,\%) & 25\,941 (17.5\,\%) & 6\,315 (1.3\,\%) & 50\,521 (10.7\,\%) & 5\,945 (0.4\,\%) & 95\,125 (6.6\,\%) \tabularnewline  
  {$\widehat{T}_{k,j}$ memories} & 734 (0.5\,\%) & 5\,873 (4.0\,\%) & 1\,451 (0.3\,\%) & 23\,220 (4.9\,\%) & 2\,888 (0.2\,\%) & 92\,426 (6.5\,\%) \tabularnewline  
  {Control unit} & 459 (0.3\,\%) & 459 (0.3\,\%) & 728 (0.2\,\%) & 728 (0.2\,\%) & 1\,259 (0.1\,\%) & 1\,259 (0.1\,\%) \tabularnewline  
  {Miscellaneous} & -- & 10\,793 (7.3\,\%) & -- & 31\,417 (6.7\,\%) & -- & 70\,898 (5.0\,\%) \tabularnewline  
  \bottomrule
  \end{tabular}

  \end{table*}

We also developed reference ASIC designs for systolic array sizes of $N=9$, $N=17$ and $N=33$. The ASIC designs were implemented using Synopsys DC and IC Compiler and optimized for a TSMC 40\,nm CMOS process. The associated implementation results are shown in \tblref{tbl:implresultstaserASIC}.  
As for our FPGA designs, the silicon area increases quadratically with the array size $N$. This can be verified both visually in \figref{fig:backend_color} as well as numerically in \tblref{tbl:areataserASIC}, where we see that the unit areas of each PE and scale unit remain nearly constant, while the total area of the PEs increases with $N^2$. As expected, the \rev{unit} area of the $\widehat{T}_{k,j}$ memories increases with $N$, as each one of these memories contains a column of an $N\times N$ matrix.
The critical paths for the $N=9$, $N=17$, and $N=33$ arrays are within the PE's MAC unit, the inverse square root LUT, and the row broadcasting bus, respectively.

\begin{figure}[tp]
\begin{tabular}{cc}
	\begin{adjustbox}{valign=t}
		\begin{tabular}{@{}c@{}}
		\subfigure[N=9]{\includegraphics[width=0.192 \columnwidth]{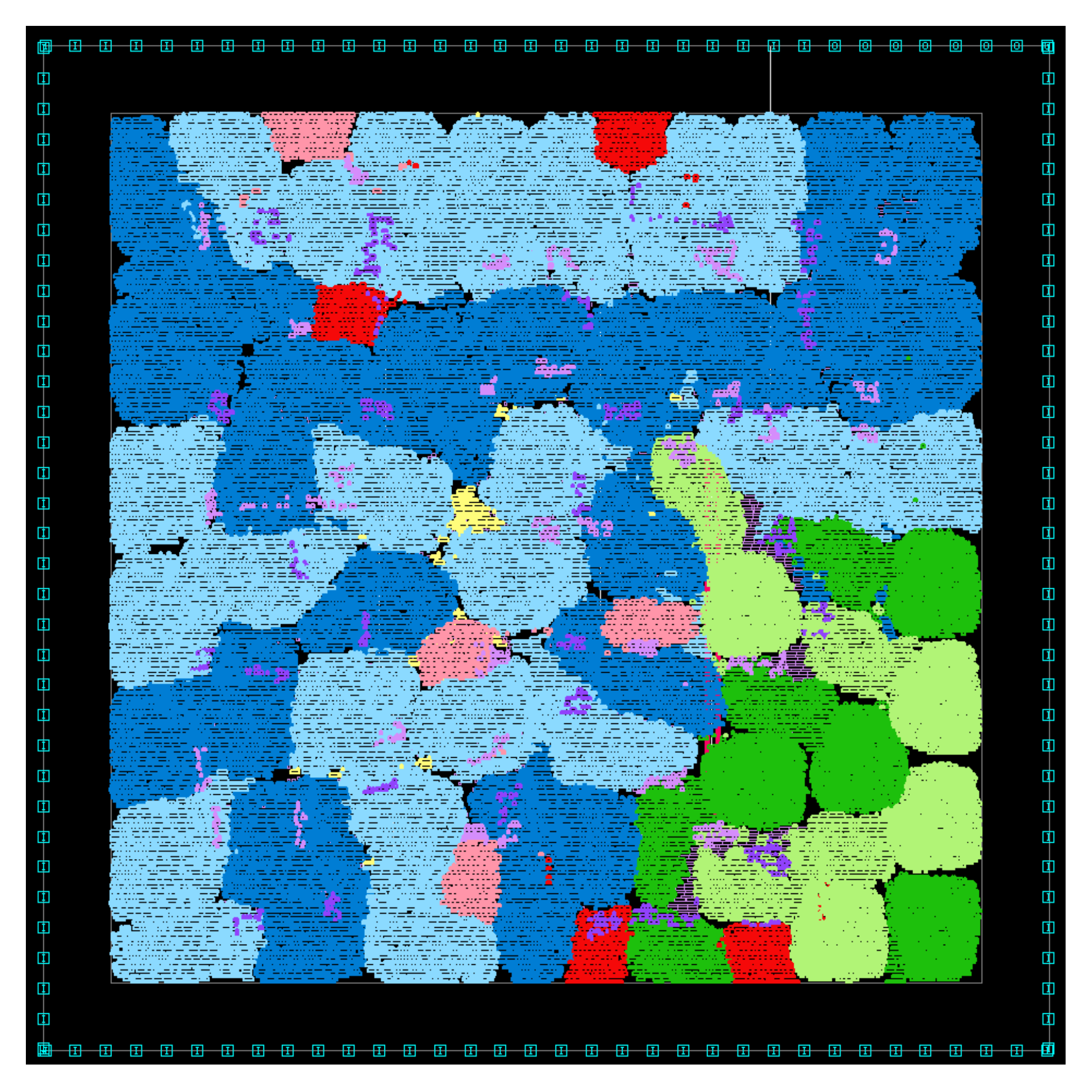}\label{fig:backend_N9_color}}
		\\
		\subfigure[N=17]{\includegraphics[width=0.34 \columnwidth]{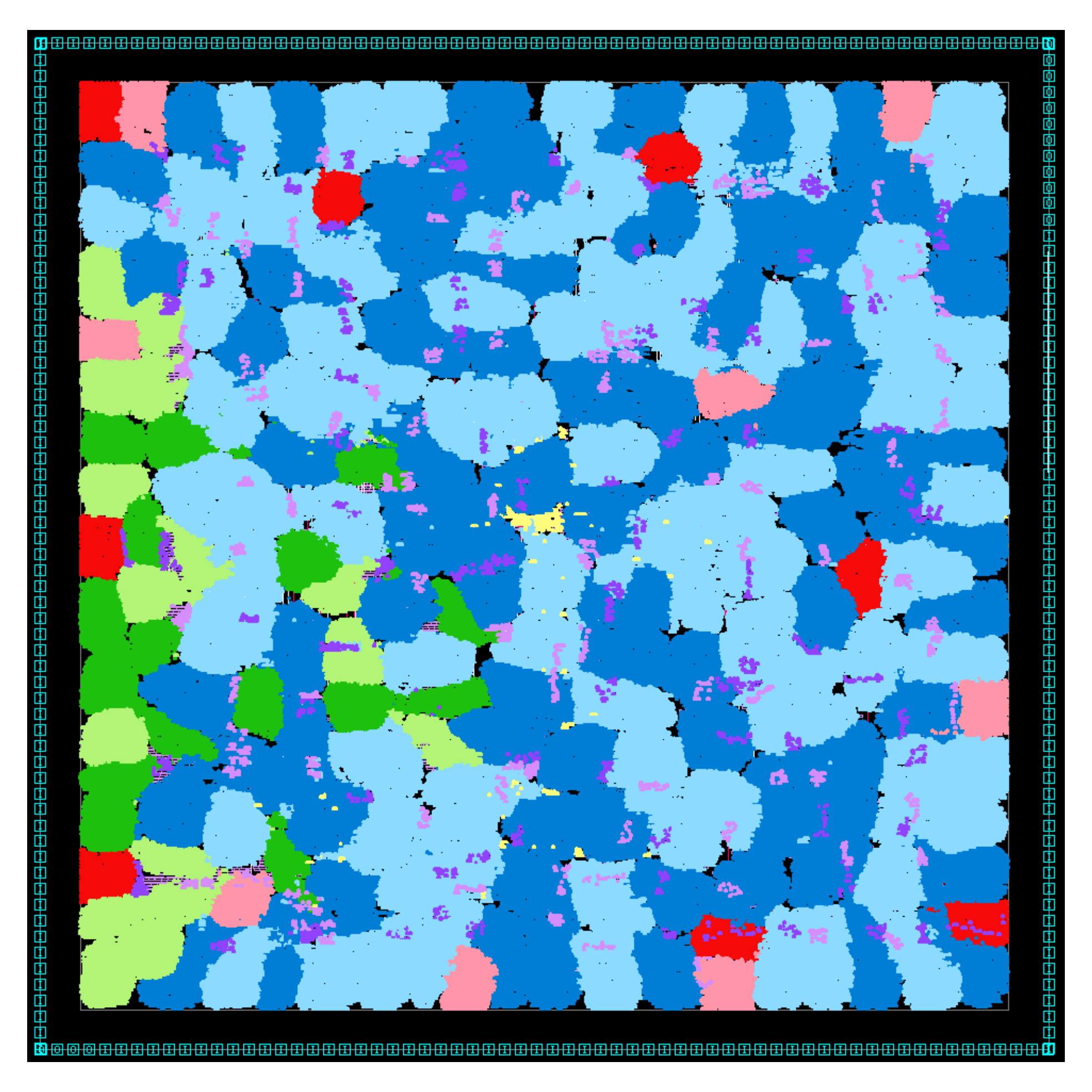}\label{fig:backend_N17_color}}
		\end{tabular}
	\end{adjustbox}
	&
	\begin{adjustbox}{valign=t}
     	\subfigure[N=33]{\includegraphics[width=0.58\columnwidth]{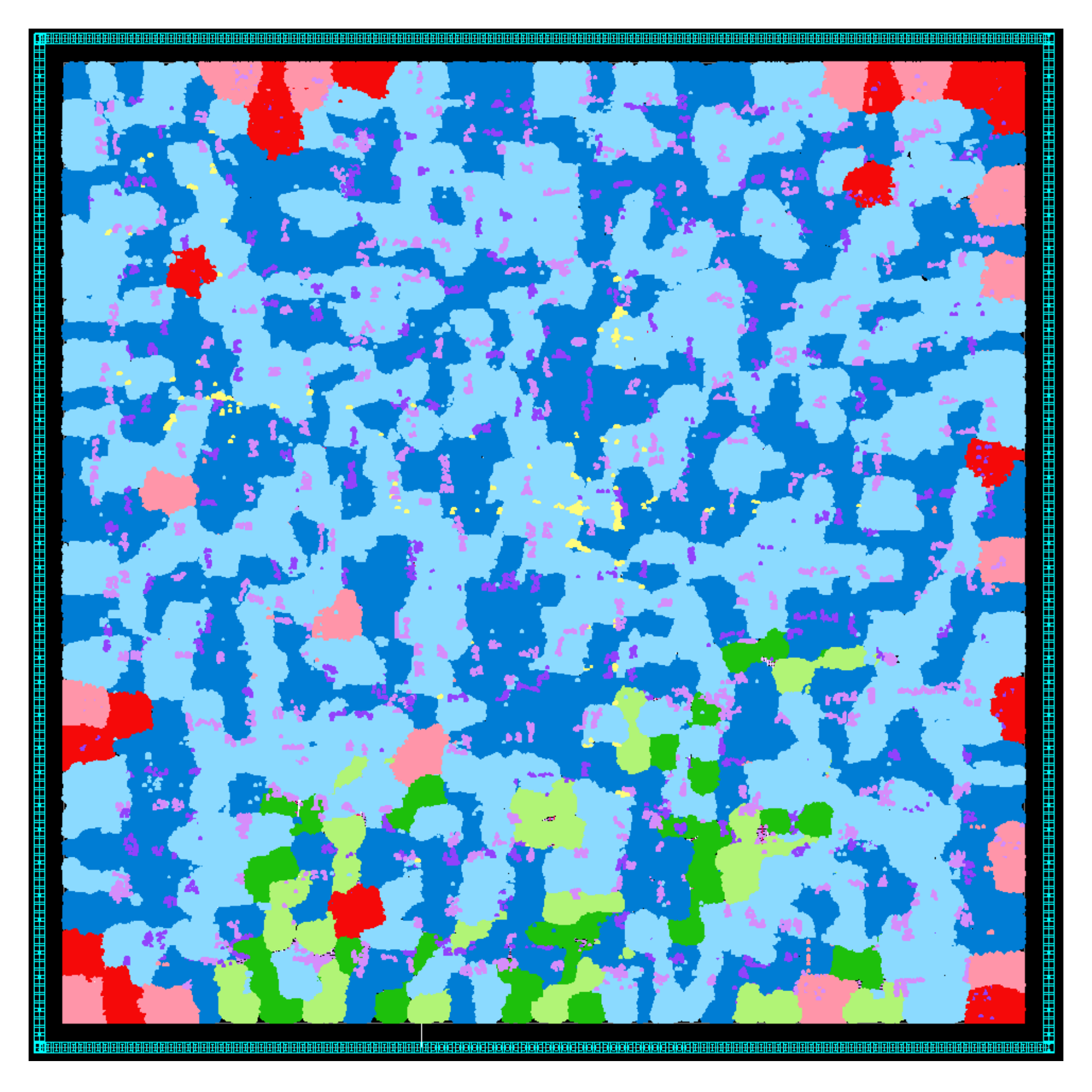}\label{fig:backend_N33_color}}		
	\end{adjustbox}
\end{tabular}
\caption{Layout  of the TASER ASIC designs for $N=9$, $N=17$ and $N=33$ array sizes. The different modules of the design were colored in the following way: PEs are colored in blue, memories in red, the scale units in green, the busses of the RBUs and CBUs in purple, and the control unit in yellow. Light and dark versions of the same color are alternated according to the order in which the modules appear in the hardware description code.}
\label{fig:backend_color}  
\end{figure}

 \begin{table}
  \begin{minipage}[c]{1\columnwidth}
            \centering
    \caption{Comparison of data detection ASICs for \mbox{$128$ BS antennas, $8$ users large-MIMO systems}   \label{tbl:implcompASIC}}
    \vspace{-1mm}
    \scalebox{0.95}{
  \begin{tabular}{llllll}
  \toprule
  {Detection algorithm} & TASER & TASER & Neumann~\cite{YWWDCS14b} \tabularnewline
  {Error-rate performance}    & Near-ML & Near-ML & Near-MMSE \tabularnewline 
  {Modulation scheme} & BPSK & QPSK & 64-QAM \tabularnewline
  {Preprocessing} & Not included & Not included & Included \tabularnewline
  {Iterations} & 3 & 3 & 3  \tabularnewline
  \midrule
  {CMOS technology [nm]} & 40 & 40 & 45 \tabularnewline
  {Supply voltage [V]} & 1.1 & 1.1 & 0.81 \tabularnewline
  \midrule
  {Clock freq.\ [MHz]} & 598 & 560 & 1\,000 (1\,125\footnote{\label{footnote:scaling}Technology scaling to 40\,nm and 1.1\,V assuming: $A \thicksim 1/\ell^2$, $t_{pd} \thicksim 1/\ell$, and $P_{dyn} \thicksim 1/(V_\ell^2 \ell)$ \cite{Razavi02}.}) \tabularnewline
  {Throughput [Mb/s]} & 99 & 125 & 1\,800 (2\,025\textsuperscript{\ref{footnote:scaling}}) \tabularnewline
  {Core area [mm$^2$]} & 0.150 & 0.483 & 11.1 (8.77\textsuperscript{\ref{footnote:scaling}})\tabularnewline
  {Core density [\%]} & 69.86 & 68.89 & 73.00\tabularnewline
  {Cell area\footnote{Excluding the gate count of memories.} [kGE]} & 142.4 & 448.0 & 12\,600 \tabularnewline
  {Power\footnote{At maximum clock frequency and given supply voltage.} [mW]} & 41.25 & 87.10 & 8\,000 (13\,114\textsuperscript{\ref{footnote:scaling}}) \tabularnewline
  \midrule
  {Throughput/cell}  & \multirow{2}{*}{695} & \multirow{2}{*}{279} & \multirow{2}{*}{161} \tabularnewline
  {area\textsuperscript{\ref{footnote:scaling}} [b/(s$\times$GE)]} & & & \tabularnewline
  {Energy/bit\textsuperscript{\ref{footnote:scaling}} [pJ/b]}  & 417 & 697 & 6\,476 \tabularnewline    
  \bottomrule
  \end{tabular}
  }
  \end{minipage}

  \end{table}

In \tblref{tbl:implcompASIC}, we compare our TASER ASIC implementation to the Neumann-series detector in \cite{YWWDCS14b}, which is---to the best of our knowledge---the only ASIC design for massive MU-MIMO systems. While the latter offers a significantly higher throughput than our design, TASER's reduced area and power consumption result in superior hardware-efficiency (measured in throughput per cell area) and power-efficiency (measured in energy per bit). Furthermore, TASER enables near-ML performance for massive MU-MIMO systems where the number of users is in the same range as the number of BS antennas (see Figures \ref{fig:16x16bpsk_ver} and \ref{fig:16x16qpsk_ver}).
We note that the comparison presented on \tblref{tbl:implcompASIC} is not entirely fair. TASER does not include preprocessing circuitry, whereas the Neumann-series detector~\cite{YWWDCS14b} includes preprocessing circuitry and was optimized for wideband systems that use single-carrier frequency-division multiple access (SC-FDMA).

\rev{We finally note that there exists a plethora of data-detector ASICs for traditional, small-scale MIMO systems (see~\cite{wong2002vlsi,burg2005vlsi,jsac07,studer2010vlsi,5405138,5560294,5570931,liao20143,6725688} and the references therein). While most of these designs achieve near-ML performance and/or throughputs in the Gb/s regime in small-scale MIMO systems, their efficacy for large MIMO is unexplored---a corresponding algorithm and hardware-level comparison is left for future work.}


\section{Conclusions}
\label{sec:conclusions}

We have proposed---to the best of our knowledge---the first data-detector implementation that uses semidefinite relaxation. 
\rev{Our novel algorithm, referred to as Triangular Approximate SEmidefinite Relaxation (TASER), is suitable for coherent data detection in massive MU-MIMO systems, as well as joint channel estimation and data detection (JED) in large SIMO systems.}
\rev{We have developed a corresponding systolic VLSI architecture and implemented FPGA and ASIC reference designs. Our results have shown that TASER achieves comparable hardware-efficiency as existing  massive MU-MIMO data detectors, while providing near-ML performance, even for systems where the number of users is comparable to the number of BS antennas.
Hence, for systems supporting a large number of low-rate users (e.g., $16$ users or more) where BPSK and QPSK transmission is sufficient, TASER provides a viable alternative to sub-optimal, linear data-detection methods, or optimal but computationally expensive non-linear methods.
We also note that TASER can be used in so-called overloaded systems, i.e., systems with more users than BS antennas---such a scenario may be of interest in large sensor networks or for the internet of things (IoT).}

There are many avenues for future work. Traditional SDR-based data detection only supports BPSK and QPSK transmission and hard-output \rev{data} detection. Extending TASER to support higher-order modulation schemes using the methods in~\cite{wiesel2005semidefinite, sidiropoulos2006} is the subject of ongoing research. 
Furthermore, developing efficient ways to compute soft-output values (in the form of log-likelihood ratio values) within TASER is left for future work.
Finally, SDR-based data detection for JED in MU-MIMO systems is an interesting open research topic. 


\section*{\rev{Acknowledgments}}
\sloppy
\rev{The authors would like to thank Prof.\ C.\ Batten for his support with the ASIC design flow. C.~Studer would like to thank Prof.\ W.~Xu for insightful discussions on JED. O.~Casta\~neda would like to thank Cornell University's CienciAmerica Program that enabled this  research project. O. Casta\~neda and C. Studer were supported in part by Xilinx Inc., and by the US National Science Foundation (NSF) under grants ECCS-1408006 and CCF-1535897. T.~Goldstein was supported in part by the US NSF under grant CCF-1535902 and by the US Office of Naval Research under grant N00014-15-1-2676.}
\fussy


\balance


\end{document}